\newif\iffull\fulltrue
\newtheorem{theorem}{Theorem}
\newtheorem{lemma}[theorem]{Lemma}
\newtheorem{corollary}[theorem]{Corollary}
\newtheorem{observation}[theorem]{Observation}
\theoremstyle{definition}
\newtheorem{definition}[theorem]{Definition}
\newtheorem{algorithm}[theorem]{Algorithm}
\newtheorem{subroutine}[theorem]{Subroutine}
\newcommand{\smallcap}[1]{\textsc{#1}\index{#1@$\textsc{#1}$}}
\newcommand{\offline}{\smallcap{offline}}
\newcommand{\inhull}{\smallcap{inhull}}
\newcommand{\nham}{\smallcap{\#ham}}
\newcommand{\npath}{\smallcap{\#path}}
\newcommand{\nsurround}{\smallcap{\#surround}}
\newcommand{\npoly}{\smallcap{\#poly}}
\title{Non-crossing Hamiltonian Paths and Cycles\\ in Output-Polynomial Time}
\author{David Eppstein\thanks{Department of Computer Science, University of California, Irvine. Research supported in part by NSF grant CCF-2212129.}}
\date{ }
\titlerunning{Non-crossing Hamiltonian Paths and Cycles in Output-Polynomial Time}
\author{David Eppstein}{Computer Science Department, University of California, Irvine, USA}{eppstein@uci.edu}{}{Research supported in part by NSF grant CCF-2212129}
\authorrunning{D. Eppstein}
\keywords{polygonalization, non-crossing structures, output-sensitive algorithms}
\begin{document}

\maketitle

\begin{abstract}
We show that, for planar point sets, the number of non-crossing Hamiltonian paths is polynomially bounded in the number of non-crossing paths, and the number of non-crossing Hamiltonian cycles (polygonalizations) is polynomially bounded in the number of surrounding cycles. As a consequence, we can list the non-crossing Hamiltonian paths or the polygonalizations, in time polynomial in the output size, by filtering the output of simple backtracking algorithms for non-crossing paths or surrounding cycles respectively. To prove these results we relate the numbers of non-crossing structures to two easily-computed parameters of the point set: the minimum number of points whose removal results in a collinear set, and the number of points interior to the convex hull. These relations also lead to polynomial-time approximation algorithms for the numbers of structures of all four types, accurate to within a constant factor of the logarithm of these numbers.
\end{abstract}

\section{Introduction}

In how many ways can we ``connect the dots'', turning planar points into the vertices of a simple polygon? Despite heavy study, the answer is still unclear. Steinhaus proved in the 1960s that general-position points always have at least one polygonalization~\cite{Ste-64}, but the condition is too strong: non-collinearity suffices. Points in convex position have only one polygonalization; other inputs can have exponentially many, but with upper and lower bounds that are far from matching~\cite{GarNoyTej-CGTA-00,ShaSheWel-JCTA-13}. The complexity of counting polygonalizations is unknown~\cite{MitORo-IJCGA-01,MarMil-SoCG-16,Epp-DCG-20}. Although all polygonalizations can be listed in singly-exponential time~\cite{Wet-JoCG-17,YamAviHor-DAM-21}, it was unknown (prior to our work) how to list them more quickly when there are few, for instance in polynomial time per output or in time polynomial in the output size.

Our main result is that both polygonalizations and a closely related structure, non-crossing Hamiltonian paths, can be listed in time polynomial in the output size by simple backtracking algorithms.  These algorithms search spaces of easier-to-list structures, the \emph{surrounding polygons}~\cite{YamAviHor-DAM-21} and non-crossing paths, respectively. To prove these new results,
following our work on monotone parameters of point sets~\cite{Epp-18},
we relate the numbers of all four types of structures (polygonalizations, surrounding polygons, non-crossing Hamiltonian paths, and non-crossing paths) to two easily-computed parameters that depend only on the order-type of the points: the smallest number of points whose removal results in a collinear subset, and the number of points interior to the convex hull. These relations imply that the number of polygonalizations is at most polynomial in the number of surrounding polygons, and that the number of non-crossing Hamiltonian paths is at most polynomial in the number of non-crossing paths. Therefore, an algorithm for the easier-to-list structures will take output-polynomial time when its output is filtered to generate only the harder-to-list structures. Our methods also provide a polynomial-time approximation algorithm for counting these structures, obtaining a constant approximation ratio with respect to the logarithm of the count.

We do not calculate explicitly the exponents of the polynomials relating these numbers of structures, as our upper bounds are quite imprecise. Instead, in a final section, we describe point sets whose exponential numbers of non-crossing structures can be calculated precisely. These examples demonstrate that the exponent of paths in terms of Hamiltonian paths can be at least $\log_23\approx 1.585$ and that the exponent of surrounding polygons in terms of polygonalizations can be at least $\log_2\bigl((3+\sqrt5)/2\bigr)\approx 1.388$. 

Output-polynomial time bounds, such as we prove, are not as good as a time bound that multiplies the output size by a polynomial of the input size, and even less good than polynomial delay for each output structure. Nevertheless, our results represent a significant improvement on previously known time bounds, which are singly exponential even for inputs whose output size is subexponential. 

\iffull
\else
Because of space limitations, we omit many proofs from this proceedings version.
A full version of this paper is available at $\dots$
\fi

\subsection{Related work}

A \emph{planar straight line graph}, with given points as vertices, consists of line segments having the points as endpoints, with no line segment passing through a given point and no two line segments intersecting except at a shared endpoint. When the form of the resulting graph is constrained, one obtains non-crossing structures of various types, including triangulations (non-crossing maximal planar graphs), non-crossing spanning trees, and polygonalizations (non-crossing Hamiltonian cycles). Extensive research in discrete and computational geometry has sought upper and lower bounds for numbers of non-crossing structures, and studied algorithmic problems of counting or listing these structures for a given point set, or of finding a non-crossing structure that is optimal for some objective function~\cite{AjtChvNew-TPC-82,AloRajSur-FI-95,FlaNoy-DM-99,GarNoyTej-CGTA-00,DumTot-DCG-10,ShaShe-CPC-13,HofSchSha-30E-13,RazWel-RCS-11,ShaSheWel-JCTA-13,AlvBriCur-DCG-15,MarMil-SoCG-16,CheChoEst-DAM-22}. Many problems of this sort have algorithms for listing all graphs, with polynomial time per output, based on systems of local moves that link the state space into a connected structure~\cite{AviFuk-DAM-96,Bes-CGTA-02,AicAurHue-GC-07,KatTan-DCG-09,WuChaPai-TCS-11,Tan-EoA-16,Wet-JoCG-17,YamAviHor-DAM-21}.

However, for polygonalizations and  non-crossing Hamiltonian paths no such structure is known, and natural systems of local moves that change two or three edges at a time are known not to connect all polygonalizations~\cite{HerHouHur-TCS-02}. Certain generalizations of polygonalizations have state spaces connected by local moves~\cite{DamFlaORo-TCS-10,YamAviHor-DAM-21}, but this does not directly yield fast algorithms for polygonalizations, because of the many non-polygonalizations in these state spaces. As a 2011 survey of Welzl summarizes, ``Basically nothing is known for related algorithmic questions (determining the number of simple polygonizations for a given point set, enumerating all simple polygonizations)''~\cite{Wel-CCCG-11}.
The shortest polygonalization is the $\mathsf{NP}$-hard Euclidean traveling salesperson tour~\cite{QuiSup-AMM-65}, and several other optimal polygonalizations are also $\mathsf{NP}$-hard~\cite{Fek-DCG-00,DemFekKel-JEA-22}; the complexity of the longest polygonalization is another unknown~\cite{DumTot-DCG-10}.

Past work on  non-crossing Hamiltonian paths includes approximation to the longest path~\cite{AloRajSur-FI-95,DumTot-DCG-10} and the existence of properly colored paths for colored point sets~\cite{BanBanBho-TCS-21}. The number of non-crossing Hamiltonian paths can range from one, for collinear points, to exponentially large; for instance, $n\ge 2$ points in convex position have exactly $n2^{n-3}$ paths~\cite{A001792}.

The surrounding polygons that we use for our polygonalization algorithm are another class of planar straight line graphs: they are the simple polygons that use a subset of the input points as vertices, and contain all of them. Yamanaka et~al.~\cite{YamAviHor-DAM-21} introduced these polygons, and showed how to list them in polynomial time per polygon, from which they derived a singly-exponential time bound and polynomial space bound for listing polygonalizations. Despite this progress they were unable to obtain an output-sensitive time bound for polygonalizations. It is their algorithm that we follow here, with a new output-sensitive analysis.  Yamanaka et~al. also prove that, in the worst-case exponential bounds for numbers of polygonalizations and surrounding polygons, the bases of the exponentials are within 1 of each other; however, this analysis does not imply our stronger result, that on arbitrary instances the numbers of these two types of polygons are bounded by polynomials of each other. The examples from our final section confirm theoretically the empirical results of Yamanaka et~al. that polygonalizations and surrounding polygons can grow at different rates.

\section{Two simple backtracking algorithms}

In this section we outline two simple algorithms (one a standard backtracking search, and the other the reverse search algorithm of Yamanaka et~al.~\cite{YamAviHor-DAM-21}) for listing non-crossing paths and surrounding cycles. These known algorithms will be the ones we use to prove our new time bounds for listing non-crossing Hamiltonian paths and polygonalizations. We state these bounds as theorems in this section, and defer the proofs to later sections.

\begin{definition}
Define a \emph{non-crossing path} for a set of points $S$ to be a non-self-intersecting polygonal curve $P$ that passes through a subset of points of $S$, has endpoints in $S$, and turns only at points of $S$. Define the \emph{vertices} of a non-crossing path to be the set $P\cap S$, counting a point of $S$ as a vertex even when $P$ passes straight through that point. Define a \emph{non-crossing path sequence} to be the sequence of vertices in a non-crossing path of a given set of points in the plane, including also one-vertex sequences and the empty sequence. Let $|S|$ denote the number of points in $S$, let $\npath(S)$ denote the number of non-crossing paths with vertices in $S$, and let $\nham(S)$ denote the number of non-crossing Hamiltonian paths of~$S$.
\end{definition}

Each non-crossing path corresponds to two sequences (one for each end-to-end order in which its vertices can be placed). We can form a rooted tree with the non-crossing path sequences as its nodes, in which the empty sequence is the root, by defining the parent of any other sequence to be the subsequence obtained by removing its last vertex.
The non-crossing path sequences can be listed in polynomial time per sequence by performing a depth-first search of this tree. With a little care in listing the children of each node quickly, we can reduce this to linear time per sequence:

\begin{subroutine}[listing children of a non-crossing path sequence]
\label{alg:list-path-children}
To find the children of a non-crossing path sequence $\sigma$, ending at point $p$:
\begin{itemize}
\item Apply the simple stack-based linear time algorithm of Lee to determine the visibility polygon $V$ of the final vertex of the path, the region of the plane within which that vertex can be connected to another point by a segment that does not cross the existing path~\cite{Lee-CVGIP-83}.
\item Let $U$ be the set of points not in $\sigma$, and find the radial ordering of $U$ around $p$. When there are ties, keep only the closest of the tied points to $p$.
\item Merge the radial orderings of $U$ and $V$ to determine, for each point $u$ in $U$, the edge of $V$ that is crossed by a ray from $p$ through $u$.
\item Whenever the merge finds a point $u$ that is closer to $p$ than the corresponding edge of $V$, make a child sequence by concatenating $u$ to the end of $\sigma$.
\end{itemize}
The radially-sorted lists of all points around each of the given point can be precomputed and stored in $O(|S|^2)$ time; essentially, this is the same as the problem of constructing and storing an arrangement of lines dual to the points.
With this precomputation, all steps of this algorithm take time $O(|S|)$.
\end{subroutine}

\begin{subroutine}[listing all non-crossing paths]
\label{alg:all-paths}
To list all non-crossing paths of a given set of points, perform a depth-first search of the tree of non-crossing path sequences, using \cref{alg:list-path-children} to list the children of each node in the tree. For each node that the search reaches, output it as a path whenever the starting vertex of the sequence has a smaller index than the ending vertex, so that we only output each non-crossing path once. For a point set $S$, we spend $O(|S|^2)$ preprocessing time and $O(|S|)$ time per path. The number of paths is always $\Omega(|S|^2)$, even for collinear point sets, because a Hamiltonian path always exists and contains that many paths within it, so the preprocessing time is dominated by the per-path time and the total time is $O(|S|\cdot\npath(S))$.
\end{subroutine}

With these subroutines in hand, we can list all non-crossing Hamiltonian paths by the following very simple algorithm.

\begin{algorithm}[listing non-crossing Hamiltonian paths]
\label{alg:ham-paths}
To list all non-crossing Hamiltonian paths in a point set $S$:
\begin{itemize}
\item List all non-crossing paths by \cref{alg:all-paths}.
\item Whenever a path uses all points of $S$, output it as a non-crossing Hamiltonian path.
\end{itemize}
\end{algorithm}

\begin{theorem}
\label{thm:path-alg}
For a point set $S$ (not assumed to be in general position),
\cref{alg:ham-paths} takes time $\left(|S|\cdot\nham(S)\right)^{O(1)}$ to list all non-crossing Hamiltonian paths.
\end{theorem}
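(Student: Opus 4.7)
The time analysis is almost immediate from the per-step bounds already in the excerpt. By the analysis of \cref{alg:all-paths}, listing every non-crossing path of $S$ takes $O(|S|^2)$ preprocessing plus $O(|S|)$ per path, for a total of $O(|S|\cdot\npath(S))$ time. The filter in \cref{alg:ham-paths} adds only an $O(|S|)$ Hamiltonicity check per path, so the total runtime remains $O(|S|\cdot\npath(S))$. The theorem therefore reduces to the counting inequality
$$\npath(S) \;\leq\; \bigl(|S|\cdot\nham(S)\bigr)^{O(1)},$$
and all the substantive work goes into this bound.

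The plan is to route through the two auxiliary parameters flagged in the introduction: the minimum number $k$ of points of $S$ whose removal leaves the remainder collinear, and the number $h$ of points strictly interior to the convex hull of $S$. Writing $t = k + h$, I would sandwich both quantities of interest between matching bounds
$$\npath(S) \leq |S|^{O(1)}\cdot\alpha^{t}
\qquad\text{and}\qquad
\nham(S) \geq |S|^{-O(1)}\cdot\beta^{t}$$
for some constants $\alpha \geq \beta > 1$. Combining these yields $\npath(S) \leq |S|^{O(1)}\cdot\nham(S)^{\log_\beta \alpha}$, which is exactly the required form.

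For the upper bound on $\npath(S)$, I would argue that when $t$ is small, any non-crossing path is essentially determined by its scan order through the collinear core together with a bounded number of local choices at each of the $k$ off-line and $h$ interior points; a peeling recursion on these exceptional points, or a direct combinatorial encoding, should yield an exponential-in-$t$ bound with polynomial prefactor. For the lower bound on $\nham(S)$, the fully collinear case has $\npath(S)=O(|S|^2)$ and $\nham(S)=1$, so the inequality holds trivially; otherwise I would start from any one non-crossing Hamiltonian path (guaranteed to exist by Steinhaus) and produce $\Omega(t)$ geometrically independent re-routings around the exceptional points, each doubling the count of distinct Hamiltonian paths.

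The main obstacle will be the lower bound on $\nham(S)$, since a carelessly chosen local re-routing around an interior or off-line point can collide with distant portions of the Hamiltonian path and fail to be non-crossing. To control this, I expect to need a careful inductive decomposition of $S$ rather than a purely local rewiring argument, for instance by a separating line that isolates a small cluster of non-collinear points, or by an onion-peel of convex layers. The per-path work in \cref{alg:all-paths} is easy; the geometric combinatorics of the sandwich bounds is where the real difficulty lies, and I would expect the author to split them out as separate lemmas and defer them to the sections after this statement.
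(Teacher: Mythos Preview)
Your reduction of the runtime to the counting inequality $\npath(S) \le (|S|\cdot\nham(S))^{O(1)}$ is exactly right, and the paper does the same, deferring the inequality to \cref{thm:path-equivalence} and proving that later.

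However, your proposed route to that inequality has a genuine gap: the parameter $t = k + h$ is the wrong one for paths. The paper controls both $\npath$ and $\nham$ using $k = \offline(S)$ \emph{alone}; the parameter $\inhull(S)$ enters only in the cycle/polygonalization analysis of \cref{sec:counting-cycles}. Your claimed lower bound $\nham(S) \ge |S|^{-O(1)}\beta^{t}$ is false in general. Take $n-2$ collinear points together with one point on each side of the line: then $k = 2$ but $h = n-4$, so $t = n-2$. Your bound would force exponentially many non-crossing Hamiltonian paths, yet here $\nham(S)$ is only polynomial in $n$ (indeed $\npath(S)$ itself is polynomial, by \cref{lem:path-upper} with $k=2$). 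The failure mode is that collinear points interior to the hull inflate $h$ while giving essentially no freedom to a Hamiltonian path, which must traverse them in line order. So the ``$\Omega(t)$ independent re-routings'' you hope for simply do not exist.

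A second, smaller issue: even with the correct parameter $k$, the sandwich you write, $\npath \le |S|^{O(1)}\alpha^{k}$ against $\nham \ge |S|^{-O(1)}\beta^{k}$, is not the right shape. The paper's matching bounds are
\[
\log\npath(S) = O\!\left(\log n + k\log\tfrac{n}{k+1}\right)
\quad\text{and}\quad
\log\nham(S) = \Omega\!\left(k\log\tfrac{n}{k+1}\right),
\]
and the extra $\log(n/k)$ factor is essential in the intermediate regime (say $k \approx \sqrt{n}$) and cannot be absorbed into a polynomial prefactor. The lower bound in particular is obtained by two different constructions: a branching argument over visible-vertex paths (\cref{lem:count-vv}) for the far-from-collinear case $k = \Omega(n)$, and a partition of the off-line points into convex cells attached to segments of the line (\cref{lem:1side}) for the near-collinear case, which is where the $\log(n/k)$ factor arises.
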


\begin{proof}
This follows from \cref{thm:path-equivalence}, later in this paper, which states that
\[\npath(S)=\bigl(|S|\cdot\nham(S)\bigr)^{O(1)}.\qedhere\]
\end{proof}

A very similar tree search can also be used for polygonalizations, instead of Hamiltonian paths.

\begin{definition}
Yamanaka et~al.~\cite{YamAviHor-DAM-21} define a \emph{surrounding polygon} of a point set $S$ to be a simple polygon having a subset of the points as its vertices, surrounding all of the vertices. These include the polygonalizations (in which the subset is all of the points) as well as other polygons; in particular, the convex hull is always a surrounding polygon. Define $\nsurround(S)$ to be the number of surrounding polygons in~$S$, and $\npoly(S)$ to be the number of polygonalizations in~$S$.
\end{definition}

Yamanaka et~al. define a tree structure on surrounding polygons in which the root is the convex hull, and the parent of any surrounding polygon is obtained by removing one vertex (in a canonically chosen way) from its cyclic sequence of vertices. It follows from a version of the two-ears theorem for polygons (often credited to G. H. Meisters, but used earlier by Max Dehn~\cite{Mei-AMM-75,Gug-AHES-77}) that every polygon that is not the convex hull has a parent.

\begin{subroutine}[listing surrounding polygons]
\label{alg:surrounding}
As Yamanaka et~al.~\cite{YamAviHor-DAM-21} describe, the surrounding polygons of  point set $S$ can be listed in time $|S|^{O(1)}$ per polygon, and space $O(|S|)$, by a depth-first search of the tree of polygons. The algorithm uses the method of \emph{reverse search}~\cite{AviFuk-DAM-96} to perform the depth-first search while only maintaining the identity of a bounded number of tree nodes.
\end{subroutine}

Again, we can use this to list all polygonalizations, as was already done by Yamanaka et~al.~\cite{YamAviHor-DAM-21}.

\begin{algorithm}[listing polygonalizations]
\label{alg:poly}
To list all polygonalizations of a point set~$S$:
\begin{itemize}
\item List all surrounding polygons by \cref{alg:surrounding}.
\item Whenever a polygon uses all points of $S$ output it as a polygonalization
\end{itemize}
\end{algorithm}

Yamanaka et~al. analyzed this algorithm as having singly-exponential time, but we instead prove that it is output-sensitive:

\begin{theorem}
\label{thm:cycle-alg}
For a point set $S$ (not assumed to be in general position),
\cref{alg:poly} takes time $\bigl(|S|\cdot\npoly(S)\bigr)^{O(1)}$ to list all polygonalizations.
\end{theorem}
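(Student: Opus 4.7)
The plan is to mirror the proof of \cref{thm:path-alg} exactly, reducing the output-sensitivity claim to a structural bound that relates the number of surrounding polygons to the number of polygonalizations. First I would observe that \cref{alg:poly} is essentially \cref{alg:surrounding} followed by a trivial filter: it walks the reverse-search tree of surrounding polygons, and emits the current polygon iff its vertex set exhausts $S$. By \cref{alg:surrounding}, each node of this tree is processed in $|S|^{O(1)}$ time, so the total running time is at most $|S|^{O(1)}\cdot\nsurround(S)$, regardless of how many of those surrounding polygons turn out to be polygonalizations.

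Next I would invoke the analogue of \cref{thm:path-equivalence} for cycles, namely a theorem stated later in the paper asserting
\[
\nsurround(S) \;=\; \bigl(|S|\cdot\npoly(S)\bigr)^{O(1)}.
\]
Plugging this into the preceding bound immediately yields the desired running time
$\bigl(|S|\cdot\npoly(S)\bigr)^{O(1)}$, completing the proof of \cref{thm:cycle-alg}. So at the level of this theorem statement the proof is a one-line substitution, just as in \cref{thm:path-alg}.

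The real work, and the expected main obstacle, lies entirely in establishing that polynomial relationship between $\nsurround(S)$ and $\npoly(S)$. Following the strategy advertised in the introduction, I would expect to do this by bounding both quantities, from above and from below respectively, in terms of the two order-type parameters the paper highlights: the minimum number $k$ of points whose removal leaves a collinear set, and the number $h$ of points strictly interior to the convex hull. One would show an upper bound of the form $\nsurround(S)\le |S|^{O(1)}\cdot 2^{O(f(k,h))}$ by an encoding/charging argument against a spanning structure of the surrounding polygon, and a matching lower bound $\npoly(S)\ge 2^{\Omega(f(k,h))}/|S|^{O(1)}$ by exhibiting many polygonalizations built from local modifications (e.g.\ ear flips or reroutings around near-collinear clusters and around interior points). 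The delicate part is that these two bounds must use the \emph{same} function $f(k,h)$ up to constants, so that taking logarithms yields $\log\nsurround(S) = O(\log|S| + \log\npoly(S))$, which is exactly the polynomial equivalence the theorem requires.
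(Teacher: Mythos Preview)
Your proposal is correct and follows essentially the same approach as the paper: reduce to the running time $|S|^{O(1)}\cdot\nsurround(S)$ of \cref{alg:surrounding}, then invoke the structural equivalence theorem (\cref{thm:cycle-equivalence}) relating $\nsurround(S)$ to $\npoly(S)$. One small sharpening worth noting: the paper's \cref{thm:cycle-equivalence} actually states $\nsurround(S)=\npoly(S)^{O(1)}$ without the extra $|S|$ factor you included, because in the cycle case the upper bound (\cref{cor:cycle-upper}) and lower bound (\cref{lem:cycle-lower}) are both controlled by the single parameter $m=\min(\offline(S),\inhull(S))$ via the same expression $m(\log(n/m)+1)$, with no leftover $\log|S|$ term as in the path case.
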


\begin{proof}
This follows from \cref{thm:cycle-equivalence}, later in this paper, which states that
\[\nsurround(S)=\npoly(S)^{O(1)}.\qedhere\]
\end{proof}

\section{Counting paths}

The main result of this section is to prove a polynomial relation between the number of non-crossing paths and the number of non-crossing Hamiltonian paths, in any point set. This result combines a lower bound on the non-crossing Hamiltonian paths of a point set $S$, and an upper bound on the number of non-crossing paths, both expressed as a function of the following quantity.

\begin{definition}
Following~\cite{Epp-18}, define $\offline(S)$ to be the smallest $k$ such that removing $k$ points from $S$ leaves a collinear subset.
\end{definition}

It will be convenient to have the following standard bound on logarithms of binomial coefficients. We use $\log$ to denote the natural logarithm, but this choice of base makes no difference to the following lemma, because a change of base would only change the logarithm by a constant factor.

\begin{lemma}
\label{lem:log-binom}
For integers $k$ and $n$ with $k\le n/2$,
\[\log\binom{n}{k}=\Theta\left(k\log\frac{n}{k}\right).\]
\end{lemma}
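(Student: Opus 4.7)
The plan is to sandwich $\binom{n}{k}$ between two expressions of the form $(cn/k)^k$ and then take logarithms. For the lower bound, I would use the elementary inequality
\[\binom{n}{k}=\prod_{i=0}^{k-1}\frac{n-i}{k-i}\ge\left(\frac{n}{k}\right)^{k},\]
which holds term-by-term because $(n-i)/(k-i)\ge n/k$ whenever $n\ge k$ and $i\ge 0$. Taking logs gives $\log\binom{n}{k}\ge k\log(n/k)$ directly.

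For the upper bound, I would start from $\binom{n}{k}\le n^k/k!$ and apply the standard lower bound $k!\ge (k/e)^k$ (e.g.\ from Stirling's formula), yielding $\binom{n}{k}\le (en/k)^k$. Taking logarithms gives
\[\log\binom{n}{k}\le k\log(n/k)+k.\]
Here the additive $k$ term is the only obstacle to matching the lower bound, and this is where the hypothesis $k\le n/2$ enters: it guarantees $\log(n/k)\ge\log 2>0$, so $k\le k\log(n/k)/\log 2$. Substituting, $\log\binom{n}{k}\le (1+1/\log 2)\,k\log(n/k)=O(k\log(n/k))$.

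Combining the two bounds gives $\log\binom{n}{k}=\Theta(k\log(n/k))$ as claimed. There is no real obstacle here; the only subtle point is that one cannot drop the $k\le n/2$ hypothesis, since as $k\to n$ the quantity $k\log(n/k)$ tends to $0$ while $\log\binom{n}{k}$ does not, so the hypothesis is essential for absorbing the additive error term into the main term.
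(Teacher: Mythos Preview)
Your proof is correct and, if anything, more elementary than the paper's. You sandwich $\binom{n}{k}$ directly between $(n/k)^k$ and $(en/k)^k$, so that after taking logarithms the only slack is an additive $k$, which the hypothesis $k\le n/2$ lets you absorb via $\log(n/k)\ge\log 2$. The paper instead applies Stirling's formula in full to obtain
\[
\log\binom{n}{k}\approx k\log\frac{n}{k}+(n-k)\log\frac{n}{n-k}
\]
up to $O(\log n)$ error, and then argues that for $k\le n/2$ the first summand dominates the second, so the whole expression is $\Theta$ of the first term. Your route avoids the two-term entropy expression entirely and needs only the crude bound $k!\ge(k/e)^k$ rather than the full Stirling approximation; the paper's route, on the other hand, makes visible the symmetric entropy form $nH(k/n)$ and the specific mechanism by which the $k\le n/2$ hypothesis is used (term comparison rather than absorbing a constant). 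Both are standard arguments, and for the purposes of the paper either suffices.
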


\begin{proof}
By applying Stirling's formula, taking logarithms, and omitting terms that are $O(\log n)$, we obtain
\[\log\binom{n}{k}\approx \log\frac{n^n}{k^k(n-k)^{n-k}}=k\log\frac{n}{k}+(n-k)\log\frac{n}{n-k}.\]
This expression is symmetric in $k$ and $n-k$, and when $k\le n/2$, the first term of this approximation is the larger of the two.
\end{proof}

\subsection{Upper bound}

We begin our bounds on non-crossing paths with the simplest one to prove, the upper bound. It is known that the number of paths is at most singly-exponential in the number $n$ of vertices; we prove a tighter bound depending exponentially on the smaller number $\offline(S)$ and only polynomially on $n$. 

\begin{lemma}
\label{lem:path-upper}
Let $S$ be a set of $n$ points, with $\offline(S)=k$. Then
\[ \log\npath(S)=O\left(\log n + k\left(\log\frac{n}{k+1}\right)\right).\]
\end{lemma}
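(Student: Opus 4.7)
My plan is to fix a line $\ell$ realizing $\offline(S)=k$, so that $L := S \cap \ell$ has $n-k$ points and $K := S \setminus L$ has $k$ points, and then encode each non-crossing path combinatorially in terms of its interaction with $\ell$. The upper bound on $\npath(S)$ then follows by counting such encodings with the help of \cref{lem:log-binom}.

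The first ingredient is a structural observation: in any non-crossing path $P$, each maximal run of consecutive vertices from $L$ (call it an \emph{$L$-block}) forms a contiguous sub-interval of $L$ traversed monotonically along $\ell$. Indeed, two consecutive $L$-vertices are joined by a segment lying on $\ell$, and a non-self-intersecting polyline whose edges lie on $\ell$ can neither backtrack nor skip over an intermediate point of $L$, because by this paper's convention any point of $S$ lying on $P$ is counted as an additional vertex and a skip would create a self-intersection. Since at most $k$ vertices of $K$ separate the $L$-blocks along $P$, there are at most $k+1$ such blocks, and they occupy pairwise disjoint sub-intervals of $L$.

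The second ingredient is an encoding-and-counting argument. I would specify $P$ by giving the ordered sequence of its $K$-vertices, the $\le k+1$ interleaved $L$-blocks (each identified by its pair of endpoints in $L$ and a traversal direction), and the interleaving pattern. The $O(k)$ interval endpoints, together with the $K$-vertex identities, form a set of $O(k)$ distinguished points of $S$ chosen from $n$, so the number of possibilities for this set is $\binom{n}{O(k)}$; by \cref{lem:log-binom} its logarithm is $O\!\bigl(k\log\tfrac{n}{k+1}\bigr)$. The remaining combinatorial data---interleaving patterns, per-block directions, and the matching of sorted endpoints into pairs---contributes only a $2^{O(k)}$ multiplicative factor, absorbed into the same term. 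The additive $O(\log n)$ in the statement handles the degenerate case $k=0$, where $S$ is collinear and $\npath(S) = O(n^2)$.

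The main obstacle is making sure the counting is sharp enough to avoid a stray factor of $k!$ for the ordering of the $K$-vertices (and of the $L$-blocks) along $P$, whose logarithm would exceed $k\log\tfrac{n}{k+1}$ when $k$ is comparable to $n$. I would address this by observing that, once the set of $O(k)$ distinguished points of $S$ is fixed, the non-crossing requirement and the monotonicity of $L$-blocks heavily constrain the admissible orderings, and known singly-exponential upper bounds on the number of non-crossing Hamiltonian paths on any $k$-point subset give a $2^{O(k)}$ cap on the number of such orderings. Combining this with the $\binom{n}{O(k)}$ choice of distinguished points then yields the claimed bound.
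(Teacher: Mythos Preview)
Your proposal is correct and takes essentially the same approach as the paper's proof. Both fix a line realizing $\offline(S)=k$, observe that the maximal collinear runs of any non-crossing path are at most $k+1$ disjoint monotone sub-intervals of $L$, encode a path by the $O(k)$ interval endpoints plus per-endpoint direction bits, and then cap the remaining ordering freedom by the singly-exponential bound on non-crossing straight-line graphs over the resulting $O(k)$-point set; the paper packages this last step as counting non-crossing \emph{linear forests} on $K\cup Q$ (the induced subgraph $P[K\cup Q]$, citing Sharir--Sheffer), while you phrase it as counting non-crossing Hamiltonian paths on the distinguished set, but these are the same $2^{O(k)}$ bound.
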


\begin{proof}
We describe a method for encoding a non-crossing path using this many bits of information, so that each path is uniquely described by this encoding. For an encoding with $b$ bits of information, the number of paths can be at most $2^b$ and the logarithm of this number is $O(b)$. Let $K$ be a subset of $S$, with $|K|=k$, such that $S\setminus K$ lies on a line $L$, and let $P$ be any given non-crossing path in $S$. Let $\ell=|L\cap S|$; because $k$ is the minimum number of points that can be removed to form a collinear set, no point of $K$ can lie on $L$, and $\ell=n-k$. To describe $P$, we combine the following pieces of information:
\begin{itemize}
\item The set $Q$ of points of $L\cap S$ that belong to $P$, but for which zero or one of their neighbors in the path belong to $L$. Each such point is one of the two neighbors of a point in $K$, or one of the two ends of $P$, so $|Q|\le 2k+2$. $Q$ can be encoded by specifying its size and the subset of $L\cap S$ of that size, out of $\tbinom{\ell}{|Q|}$ possibilities, so by \cref{lem:log-binom} the number of bits needed to specify it is $O\bigl(\log n + k+k\log(n/k)\bigr)$. (Both the $\log n$ term and the $+1$ in the statement of the lemma are included to handle the case when $k=0$ but $|Q|>0$. \cref{lem:log-binom} applies only when $k\le n/2$ but for larger $k$ the bound to be proven is superlinear and the result is immediate.)
\item For each point in $Q$, a specification of whether it has a neighbor in $L$, and if so in which direction. This takes $O(k)$ bits of information.
\item The induced subgraph $P[K\cup Q]$, a linear forest using only the points in $K\cup Q$, and omitting the edges of $P$ that lie entirely within $L$. As with any type of planar straight-line graph, the number of linear forests on $O(k)$ points is singly exponential in $k$~\cite{ShaShe-CPC-13}, so $P[K]$ can be encoded with $O(k)$ bits of information.
\end{itemize}
Then $P$ may be recovered by combining the induced subgraph $P[K\cup Q]$ with segments of $L$ starting and ending at points of $Q$ and continuing in the specified direction from each of these points. All pieces of this encoding add up to the stated bound on the number of bits needed to encode the entire path.
\end{proof}

\subsection{Visible-vertex paths}

\begin{figure}[t]
\centering\includegraphics[width=0.3\textwidth]{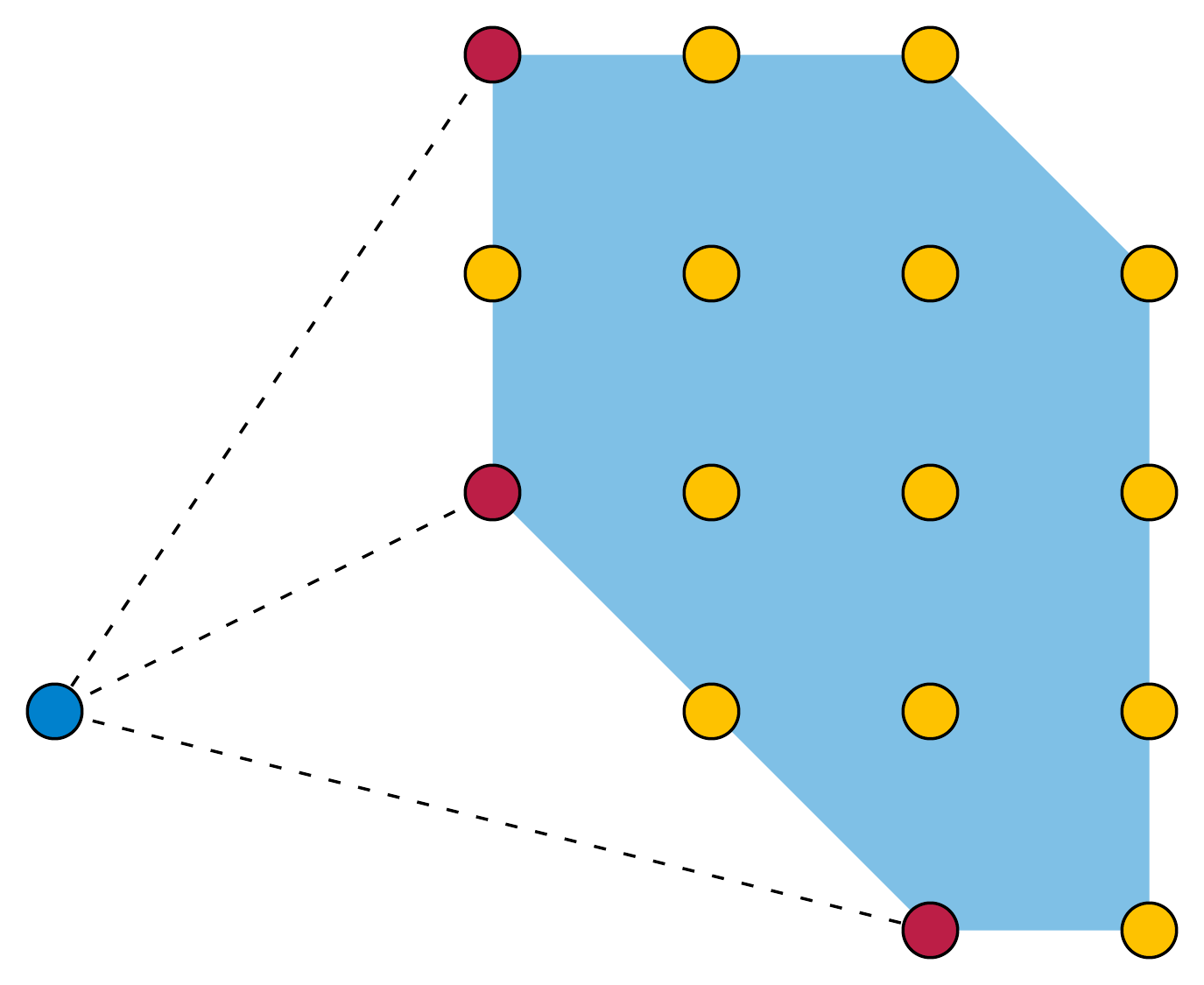} \qquad
\includegraphics[width=0.3\textwidth]{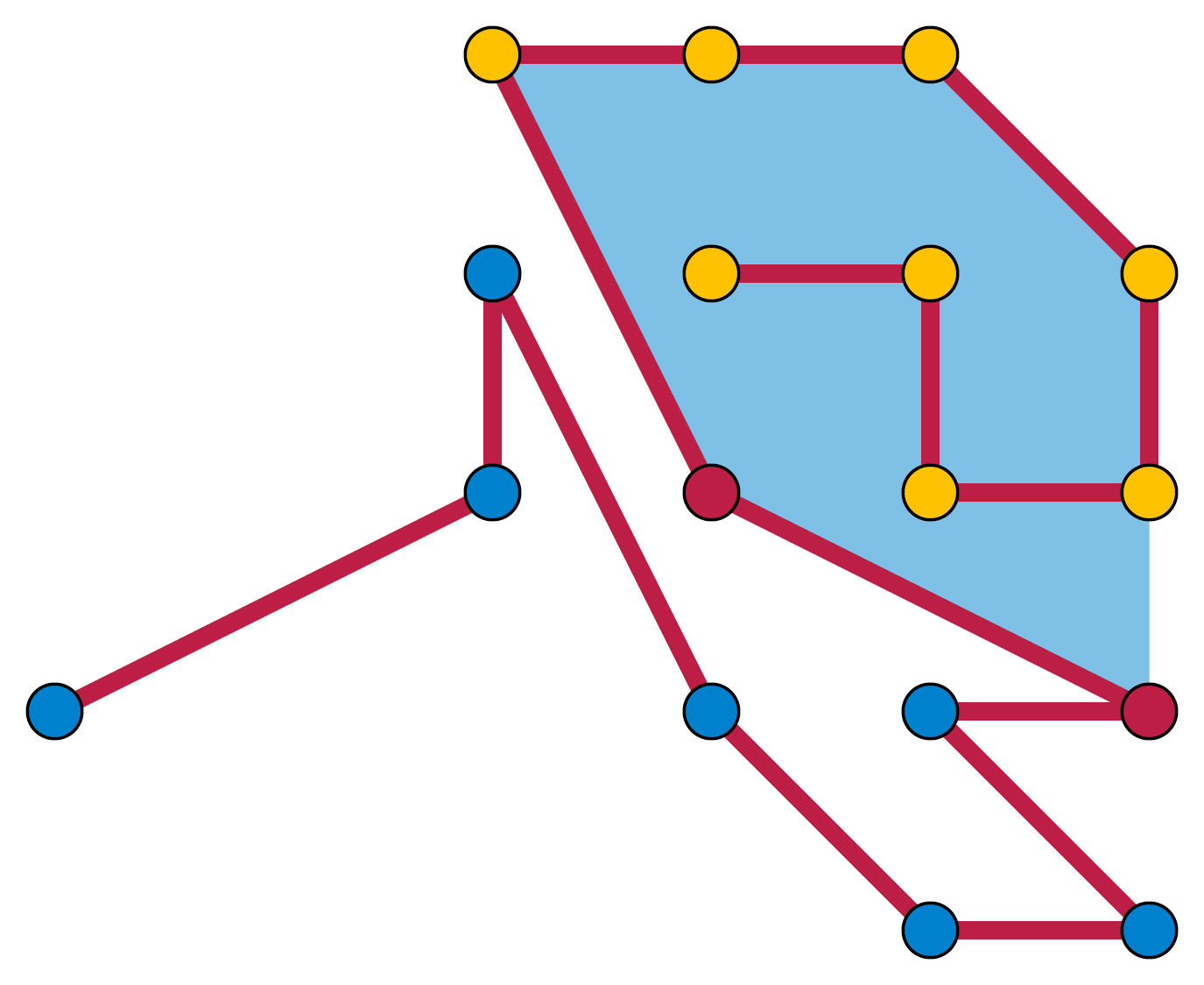}
\caption{Left: A point $p$ (blue), a point set $S$ (red and yellow), and the visible vertices of $S$ from $p$ (the three red vertices). Note that points of $S$ that lie within convex hull edges are not counted as vertices. Right: A maximal visible-vertex path (red edges) for $S\cup\{p\}$ starting from $p$, showing for one of its steps the hull (light blue) and visible vertices (red) of the remaining points.}
\label{fig:visible-vertices}
\end{figure}

Our lower bounds will greedily construct paths such that, at each step, the remaining unused points have a convex hull that is uncrossed by the current path and is visible from the endpoint of the current path.

\begin{definition}
Define a \emph{visible vertex} of a finite point set $S$, from a point $p$ that is not in the convex hull of $S$, to be a vertex $q$ of the convex hull of $S$ such that the open line segment $pq$ is disjoint from the convex hull of $S$ (\cref{fig:visible-vertices}, left). We do not consider points interior to edges of the convex hull to be vertices, and in particular they are not visible vertices, even when segment $pq$ is disjoint from the hull.
\end{definition}

\begin{observation}
\label{obs:visible}
Every nonempty finite point set $S$ and point $p$ not in the convex hull of $S$ has at least one visible vertex of $S$ from $p$. If $S\cup\{p\}$ is not collinear, there are at least two visible vertices.
\end{observation}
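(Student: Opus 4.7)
My plan is to prove the observation by splitting into cases according to the dimension of the convex hull of $S$, and in each case to identify visible vertices using the supporting (tangent) lines from $p$ to the hull. First, I would dispense with the trivial case $|S|=1$: if $S=\{q\}$ then $q$ is itself the only vertex of its hull, and since $p\neq q$ the open segment $pq$ is vacuously disjoint from $\{q\}$, so $q$ is visible. The second statement is vacuous here because two points are always collinear.

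Next, I would handle the case where $S$ has at least two points but is collinear, so the hull is a segment $[u,v]$ with $u,v$ its two vertices. If $p$ lies off the supporting line of $[u,v]$, then the open segments $pu$ and $pv$ each meet that line only at their non-$p$ endpoint, so both $u$ and $v$ are visible, giving the required two vertices (and $S\cup\{p\}$ is non-collinear). If $p$ lies on the supporting line but outside $[u,v]$, the nearer of $u,v$ is visible while the farther is hidden behind the hull, and $S\cup\{p\}$ is collinear, so only the first part of the observation is needed.

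Finally, if the convex hull $C$ of $S$ is two-dimensional, then from $p\notin C$ there are two distinct supporting lines to $C$. Each such line either meets $C$ in a single vertex, which is then visible (the open segment from $p$ to that vertex lies on the line, and the line touches $C$ only at that vertex), or it contains an entire edge $uv$ of $C$, in which case I would argue that the endpoint of the edge closer to $p$ along the line is visible: the open segment from $p$ to that nearer endpoint lies on the supporting line and stays strictly outside $C$ until reaching the endpoint itself. Taking one visible vertex from each of the two distinct supporting lines yields two distinct visible vertices, and the hypothesis that $S\cup\{p\}$ is non-collinear holds automatically.

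The main obstacle I expect is the case where a supporting line from $p$ contains an edge of the hull: the definition of visible vertex excludes non-vertex points in the interior of hull edges, and the farther endpoint of such an edge has its open segment to $p$ blocked by the edge itself. Getting the book-keeping right here — identifying the closer endpoint as visible while ruling out the farther one — is the subtle point that distinguishes this argument from the naive ``just take a tangent contact'' intuition, and it is also what makes the collinear case above split into the two subcases depending on whether $p$ is on the line of $S$.
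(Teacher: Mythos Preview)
Your argument is correct. The paper states this as an \emph{observation} without giving any proof at all, so there is nothing to compare against in detail; your supporting-line analysis is exactly the standard justification one would give, and it handles the degenerate cases (single point, collinear $S$, supporting line meeting the hull in a full edge) cleanly. Your remark that the two supporting lines in the two-dimensional case yield \emph{distinct} visible vertices is implicitly justified since both lines pass through $p$, so if they shared a contact vertex they would coincide.
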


Define a \emph{visible-vertex path} to be a polygonal chain formed by a greedy algorithm that builds a sequence of vertices, beginning with any vertex of the convex hull, by repeatedly appending to the sequence a visible vertex of the remaining points not yet in the sequence, as viewed from the last point of the sequence (\cref{fig:visible-vertices}, right). A \emph{maximal visible-vertex path} is a visible-vertex path that uses all of the points in a given point set $S$; this name is justified by the following lemma.

\begin{lemma}
\label{lem:vvpath}
Every visible-vertex path is a non-crossing path. Every maximal visible-vertex path is a non-crossing Hamiltonian path. Every non-maximal visible-vertex path can be extended to a longer visible-vertex path.
\end{lemma}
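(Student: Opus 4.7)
The plan is to set up an invariant on prefixes of a visible-vertex path and derive all three claims from it by induction. Let $p_1, p_2, \dots, p_k$ be a visible-vertex path, and write $R_i = S \setminus \{p_1, \dots, p_i\}$ for the residual point set after step $i$. The invariant I want to maintain is that each $p_i$ is a vertex (extreme point) of $\mathrm{conv}(R_{i-1})$, and consequently $p_i \notin \mathrm{conv}(R_i)$; this is needed in order for the next greedy step to be well defined, because a visible vertex must be viewed from a point outside the hull. The invariant holds at $i=1$ because $p_1$ is chosen as a hull vertex of $S$, and at each later step because the greedy rule picks $p_{i+1}$ as a vertex of $\mathrm{conv}(R_i)$.

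For the non-crossing claim I would show that for any two edges $e_i = p_i p_{i+1}$ and $e_j = p_j p_{j+1}$ with $i < j$, they meet only at the shared endpoint $p_{i+1}$ when $j = i+1$, and not at all when $j \geq i+2$. The key containment is that the open segment of $e_i$ is disjoint from $\mathrm{conv}(R_i)$ by the visible-vertex definition, while $e_j$ is contained in $\mathrm{conv}(R_{j-1}) \subseteq \mathrm{conv}(R_i)$ because both $p_j, p_{j+1}$ lie in $R_i$ for $j \geq i+1$. This rules out any intersection with the open part of $e_i$. The endpoint $p_i$ of $e_i$ lies outside $\mathrm{conv}(R_i) \supseteq e_j$ by the invariant, and for $j \geq i+2$ one further strengthens the containment to $e_j \subseteq \mathrm{conv}(R_{i+1})$, which excludes $p_{i+1}$ because the invariant applied to $p_{i+1}$ says $p_{i+1} \notin \mathrm{conv}(R_{i+1})$.

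The second claim is then immediate: a maximal visible-vertex path visits every point of $S$ by definition of maximality, so it is Hamiltonian, and it is non-crossing by the first claim. The third claim follows from the invariant together with \cref{obs:visible}: if the current path is non-maximal then $R_k \neq \emptyset$, and since $p_k \notin \mathrm{conv}(R_k)$ the observation supplies at least one visible vertex of $R_k$ from $p_k$ that can be appended, producing a longer visible-vertex path.

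The main subtlety I expect to have to be careful with is the boundary-versus-interior distinction for the convex hull. The visible-vertex definition requires the \emph{open} segment $p_i p_{i+1}$ to be disjoint from all of $\mathrm{conv}(R_i)$, including its boundary; this is what prevents a later edge $e_j$ from secretly sliding along a hull edge of $\mathrm{conv}(R_i)$ and overlapping with $e_i$. Correspondingly, the fact that $p_{i+1}$ is an extreme point of $\mathrm{conv}(R_i)$ (and not merely a point interior to a hull edge, which by the paper's convention would not count as a vertex) is what both enables the inductive step for the invariant and ensures $p_{i+1} \notin \mathrm{conv}(R_{i+1})$, so later edges cannot reach $p_{i+1}$ except through their shared endpoint with $e_{i+1}$.
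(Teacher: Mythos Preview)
Your proposal is correct and follows essentially the same approach as the paper: both arguments show non-crossing by observing that later edges lie in the convex hull of the residual set, while the earlier edge's open segment and both its endpoints lie outside that hull. You are simply more explicit, stating the invariant $p_i\notin\mathrm{conv}(R_i)$ up front and tracking indices carefully, whereas the paper argues the same containments in prose for a generic segment $pq$; the content is the same.
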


\begin{proof}
Maximal visible-vertex paths are Hamiltonian, by definition: they are defined to be paths that use all the points. Because each step in a visible-vertex path is defined locally, without respect to the earlier parts of the path, the ability to extend every non-maximal path follows immediately from \cref{obs:visible}.

It remains to prove that the resulting paths are non-crossing. For every segment $pq$ of the path, the next segment is incident to $q$ and therefore cannot cross $pq$. All segments after the next segment lie within the convex hull of the remaining points after $q$. Since $p$ and $q$ are vertices of their respective convex hulls, the convex hull of the points after them in the path does not contain them. Moreover, segment $pq$ does not cross this hull, because if it did then $q$ would not be visible from $p$. Thus, these later segments are disjoint (as closed line segments) from the closed line segment $pq$ and so cannot cross $pq$ nor even pass through $q$. Therefore, $pq$ cannot intersect any later segment. For each pair of segments in the path, a crossing is ruled out by applying this argument to the earlier of the two segments in the path, so no crossing can exist.
\end{proof}

\begin{lemma}
\label{lem:2hull}
In every finite point set $S$, every two vertices $p$ and $q$ of the convex hull of $S$ form the endpoints of a non-crossing Hamiltonian path.
\end{lemma}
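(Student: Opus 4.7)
The plan is to prove the lemma by induction on $|S|$. The base cases $|S|\le 2$ are immediate, and if $S$ is collinear then $p$ and $q$ must be the two endpoints of the line, so traversing $S$ in order along the line yields the required path.

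For the inductive step, suppose $|S|\ge 3$ and $S$ is not collinear, so the convex hull has at least three vertices. I would pick a hull vertex $v$ adjacent to $p$ in the cyclic hull order with $v\ne q$; such a $v$ exists because $p$ has two hull-neighbors and at most one of them can be $q$. The segment $pv$ is then a hull edge, so its supporting line $L$ places all of $S$ in one closed half-plane. Let $t_1,\ldots,t_j$ be the points of $S$ strictly between $p$ and $v$ along $L$, listed in that order (possibly $j=0$), and set $S''=S\setminus\{p,t_1,\ldots,t_j\}$. The point $q$ lies in $S''$, because a hull vertex cannot lie in the interior of a hull edge. Both $v$ and $q$ remain vertices of $\operatorname{conv}(S'')$, since being an extreme point is preserved under passing to subsets, and $|S''|<|S|$, so by induction there is a non-crossing Hamiltonian path $\pi''$ of $S''$ from $v$ to $q$. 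The candidate for the path on $S$ is the prefix $p,t_1,\ldots,t_j,v$ followed by $\pi''$.

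The one step requiring real care is the verification that the prefix and the suffix do not cross. The prefix edges are pairwise disjoint sub-segments of $L$ sharing only consecutive endpoints, and $\pi''$ is non-crossing by induction, so it suffices to argue that no suffix edge crosses a prefix edge. The key observation is that $\operatorname{conv}(S'')\cap L=\{v\}$: every point of $S''$ other than $v$ lies strictly off $L$ on a single open side, and any convex combination involving such a point is also strictly off $L$. Hence every suffix edge meets $L$ only at $v$, while the only prefix edge incident to $v$ is its last one, so prefix and suffix intersect exactly in the single point $v$. I expect the main obstacle to be precisely the handling of the collinear points $t_1,\ldots,t_j$: the more naive induction that removes only $p$ and prepends the edge $pv$ to a Hamiltonian path of $S\setminus\{p\}$ can fail, because such a path may visit the $t_i$ in an order incompatible with the segment $pv$ passing straight through them, producing a self-intersecting curve. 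Removing $p$ together with all of $t_1,\ldots,t_j$ at once and traversing them linearly as the prefix sidesteps this complication cleanly.
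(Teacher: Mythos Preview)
Your proof is correct and takes a genuinely different route from the paper. The paper proves this lemma using the visible-vertex path machinery developed immediately before it: starting from $p$, it greedily extends a path by always stepping to a visible vertex of the convex hull of the remaining points, choosing one different from $q$ whenever possible; \cref{obs:visible} guarantees at least two visible vertices until the remaining points become collinear, and once they do, $q$ is forced to be last. Non-crossingness is inherited from \cref{lem:vvpath}.

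Your argument instead peels off the hull edge $pv$ (together with any points of $S$ lying in its interior) and recurses on the smaller set $S''$ with the new endpoint pair $(v,q)$. The explicit handling of the collinear points $t_1,\dots,t_j$ is exactly what is needed to make the induction go through without a general-position assumption, and your observation that $\operatorname{conv}(S'')\cap L=\{v\}$ cleanly separates prefix from suffix. Your proof is self-contained and does not rely on the visible-vertex framework; the paper's version, by contrast, reuses infrastructure (\cref{obs:visible} and \cref{lem:vvpath}) that is needed anyway for the counting lemmas later, so it obtains this lemma almost for free once that infrastructure is in place.
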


\begin{proof}
Form a maximal visible-vertex path beginning at $p$, at each step choosing a visible vertex that is not $q$ when this is possible.
By \cref{obs:visible}, the path will have two points to choose between (one of which is not $q$) until the remaining points become collinear. Once they do, all remaining points that are not $q$ will lie between $q$ and the current end of the path, so $q$ cannot be included in the path until it is the last point remaining.
\end{proof}

\subsection{Lower bound for far-from-collinear sets}

To lower-bound the number of non-crossing Hamiltonian paths in a point set $S$, as a function of $\offline(S)$, we divide into two cases: the case where $\offline(S)$ is large (at least proportional to a constant fraction of $|S|$) and the case where it is small. The following lemma is valid for all non-collinear $S$, but is only tight (within a constant factor of the logarithm) in the large case.

\begin{figure}[t]
\centering\includegraphics[width=0.9\textwidth]{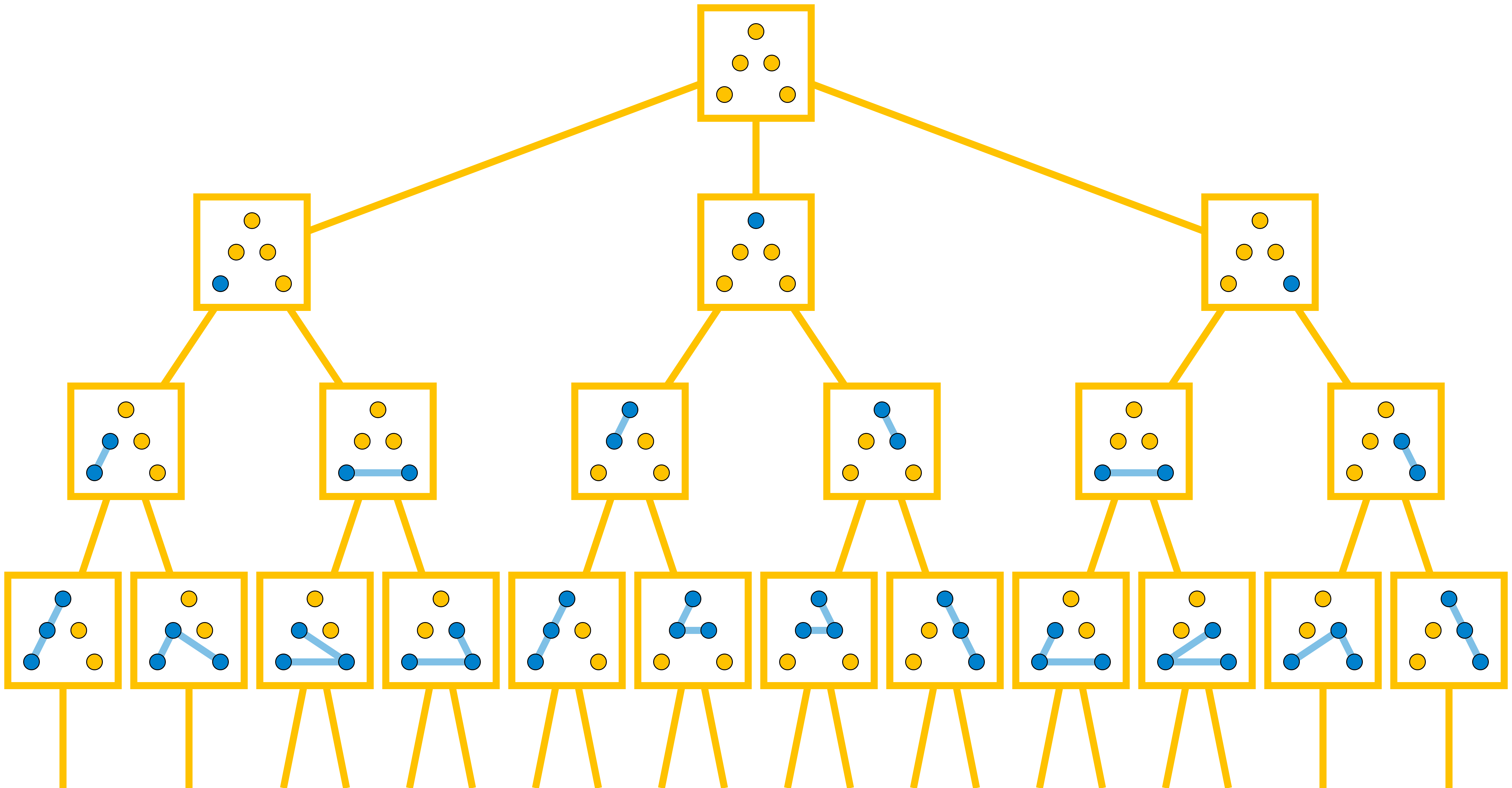}
\caption{A tree of the visible-vertex paths in a point set, where the parent of each path is obtained by removing its last point. This point set has $\offline(S)=2$. The root and the next two levels of nodes each have multiple children, but some nodes on the last level shown in the figure have only one child, because the last point on the path is collinear with all remaining points.}
\label{fig:tree-of-paths}
\end{figure}

\begin{lemma}
\label{lem:count-vv}
For every point set $S$ that does not lie on a single line, the number of non-crossing Hamiltonian paths is at least $\tfrac32\cdot 2^{\offline(S)}.$
\end{lemma}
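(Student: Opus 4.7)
The plan is to count ordered maximal visible-vertex paths via the tree of visible-vertex sequences depicted in \cref{fig:tree-of-paths}. By \cref{lem:vvpath} each such sequence is a non-crossing Hamiltonian path, and each unordered Hamiltonian path contributes at most two sequences (one for each direction of traversal), so exhibiting at least $3\cdot 2^{\offline(S)}$ leaves in this tree will finish the proof after dividing by two.

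For each non-root node $v$ of the tree, with current endpoint $p_v$ and remaining unused point set $R_v$, I would define the \emph{state set} at $v$ to be $R_v\cup\{p_v\}$ and set $k_v=\offline(R_v\cup\{p_v\})$. The central inductive claim, to be proved by induction from the leaves upward, is that the subtree rooted at $v$ contains at least $2^{k_v}$ leaves. The key observation is that a child $v'$ obtained by appending a visible vertex $q\in R_v$ has state set $R_v$, so the state set shrinks by removing exactly the old endpoint $p_v$; since deleting one point drops $\offline$ by at most one, every child $v'$ satisfies $k_{v'}\geq k_v-1$. When $k_v=0$ the state set is collinear, $v$ has a unique child, and the inductive step is trivial; when $k_v\geq 1$ the state set is non-collinear, so \cref{obs:visible} guarantees at least two children, each with at least $2^{k_v-1}$ leaves by induction, giving $2\cdot 2^{k_v-1}=2^{k_v}$.

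At the root the argument uses that non-collinearity of $S$ forces the convex hull to have at least three vertices, so the root has at least three children. Each such child starts at a hull vertex, has state set equal to all of $S$, and thus has $k_v=\offline(S)$, contributing at least $2^{\offline(S)}$ leaves; summing across the root's children yields at least $3\cdot 2^{\offline(S)}$ ordered maximal visible-vertex paths and, after the factor-of-two correction for orientation, the claimed $\tfrac32\cdot 2^{\offline(S)}$ lower bound on non-crossing Hamiltonian paths.

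The only nontrivial verification will be that \cref{obs:visible} actually applies at every recursive step, i.e., that the current endpoint $p_v$ always lies outside the convex hull of $R_v$ so that the notion of visible vertex is well defined. I expect this to follow from a short inductive remark already implicit in the proof of \cref{lem:vvpath}: the freshly appended vertex is a vertex of its step's convex hull, and therefore lies outside the hull of the strictly smaller set obtained after its own removal, so this property propagates down the entire tree.
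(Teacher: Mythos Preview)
Your proposal is correct and follows essentially the same approach as the paper: build the tree of visible-vertex sequences, argue that each node whose current endpoint together with the remaining points is non-collinear has at least two children (via \cref{obs:visible}), use this to get $3\cdot 2^{\offline(S)}$ leaves, and divide by two. The only cosmetic difference is that the paper phrases the branching bound as ``nodes within distance $\offline(S)$ of the root have at least two children'' and counts levels directly, whereas you package the same fact as an induction on the parameter $k_v=\offline(R_v\cup\{p_v\})$; the two formulations are equivalent.
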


\begin{proof}
We form a tree $T$ of visible-vertex paths, in which the root is the empty sequence of vertices and the parent of any nonempty path is obtained by removing its last vertex (\cref{fig:tree-of-paths}). By \cref{lem:vvpath}, the leaves of $T$ are non-crossing Hamiltonian paths. The root node of $T$ has at least three children (one for each convex hull vertex).
In the nodes of $T$ at distance at most $\offline(S)$ from the root, the last point in the path represented by each node is not collinear with the remaining points, by the definition of $\offline$. It follows by \cref{obs:visible} that these nodes have at least two choices of visible vertices and therefore that they have at least two children.

As a tree in which the root node has at least three children and the nodes in the next $\offline(S)$ levels have at least two children, $T$ has at least $3\cdot 2^{\offline(S)}$ leaves. Each leaf is a non-crossing Hamiltonian path, and each non-crossing Hamiltonian path can come from at most two leaves (one for each endpoint, if both endpoints are convex hull vertices). Therefore, there are at least $\tfrac32\cdot 2^{\offline(S)}$ non-crossing Hamiltonian paths.
\end{proof}

\subsection{Lower bound for near-collinear sets}

Although the next lemma covers only a special class of point sets, it is the key to our lower bounds for the case where $\offline(S)$ is small.

\begin{lemma}
\label{lem:1side}
Let $S$ be a set of points with $|S|=n$, such that $\ell$ points of $S$ lie on a line $L$ and the remaining points all lie on the same side of the line. Then
\[
\nham(S)\ge\binom{n-\lceil \ell/2\rceil}{\lfloor \ell/2\rfloor}.
\]
\end{lemma}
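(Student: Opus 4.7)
The plan is to construct an explicit family of at least $\binom{n-\lceil\ell/2\rceil}{\lfloor\ell/2\rfloor}$ distinct non-crossing Hamiltonian paths, indexed by ordered partitions of the upper points. Set coordinates so that $L$ is horizontal with $p_1,\ldots,p_\ell$ in left-to-right order, and index the $n-\ell$ upper points as $u_1,\ldots,u_{n-\ell}$ in a canonical order (whose details are discussed below). For each ordered partition of $(u_1,\ldots,u_{n-\ell})$ into $\lfloor\ell/2\rfloor+1$ (possibly empty) consecutive blocks $B_0,\ldots,B_{\lfloor\ell/2\rfloor}$, associate the vertex sequence
\[
B_0,\ p_1,\ p_2,\ B_1,\ p_3,\ p_4,\ B_2,\ \ldots,\ p_{\ell-1},\ p_\ell,\ B_{\lfloor\ell/2\rfloor}
\]
(with a singleton in place of one ``pair'' when $\ell$ is odd). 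A stars-and-bars count gives $\binom{(n-\ell)+\lfloor\ell/2\rfloor}{\lfloor\ell/2\rfloor}=\binom{n-\lceil\ell/2\rceil}{\lfloor\ell/2\rfloor}$ such partitions, matching the target bound, and distinct partitions yield distinct vertex sequences.

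The on-line pieces of each sequence traverse pairwise-disjoint sub-intervals of $L$ monotonically from left to right, and the upper pieces stay strictly above $L$; each ``lift-off'' or ``touch-down'' segment between the line and an upper block meets $L$ only at its line endpoint. The subtle point is that within each block $B_i$, the upper points must be visited in an order ensuring that the lift-off from the preceding line-point pair and the touch-down to the next do not cross each other, and that different blocks' excursions do not cross. I would choose the canonical order to be polar-angle order around a suitable reference point below $L$ (so that blocks correspond to angularly-contiguous fans), and pick the within-block direction based on whether the block lies to the left of, to the right of, or between its line pair, so that each excursion is a simple arc compatibly nested with its neighbors.

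The main obstacle is verifying non-crossingness uniformly across all partitions and all configurations, including blocks that straddle their line-pair endpoints and degenerate configurations where three or more points of $S$ are collinear (for example a line point together with two upper points), where a naively constructed segment might pass through an unintended point of $S$. I would address the former by an explicit case analysis on block position relative to its line pair, and the latter by a perturbation argument: apply the construction to a generic perturbation $S_\epsilon$ of $S$ still satisfying the hypothesis to obtain the full count for $S_\epsilon$, then argue that any coincidences or invalidities that arise as $\epsilon\to 0$ are compensated by additional valid non-crossing Hamiltonian paths of $S$ not produced by the construction, so that $\nham(S)$ remains at least $\binom{n-\lceil\ell/2\rceil}{\lfloor\ell/2\rfloor}$. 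Making the compensation bookkeeping precise is the trickiest step of the proof.
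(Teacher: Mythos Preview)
Your combinatorial framework matches the paper's exactly: pair up the line points, associate to each composition of $n-\ell$ into $\lfloor\ell/2\rfloor+1$ parts a path whose signature has blocks of upper points of those sizes between consecutive line pairs, and count by stars-and-bars. Where your proposal diverges---and where it has a genuine gap---is in the geometric realization of each composition as a non-crossing path.

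You fix a \emph{single} radial ordering of the upper points (around one reference point below $L$) and take blocks to be consecutive runs in that order. This means that the set of upper points assigned to the $i$th excursion is determined purely by the chosen block sizes, with no control over where those points sit relative to the line pair $(p_{2i},p_{2i+1})$. The assertion that the resulting excursions are ``compatibly nested'' is exactly what has to be proved, and your proposal does not prove it: you call it ``the subtle point'' and ``the main obstacle'' and defer it to an unspecified case analysis. The perturbation argument at the end is not a substitute---the claim that invalidities in the limit are ``compensated by additional valid non-crossing Hamiltonian paths of $S$ not produced by the construction'' is unjustified, and distinct paths for $S_\epsilon$ can collapse to the same path (or to an invalid one) as $\epsilon\to 0$ without any automatic compensation.

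The paper avoids this difficulty entirely by making the partition of the upper points \emph{adaptive}: for each composition $(n_1,n_2,\dots)$ it builds, greedily, a partition of the open halfplane above $L$ into convex regions $C_1,C_2,\dots$, where the boundary between $C_i$ and the remainder is the line through the line point $p_i$ following block $i$ and the $n_i$th remaining upper point in radial order \emph{around that $p_i$} (not around a fixed reference). Each $C_i$ then contains exactly $n_i$ upper points and has the relevant line points on its boundary, so \cref{lem:2hull} produces a non-crossing Hamiltonian path inside $C_i$ between those line points; disjointness of the convex regions makes global non-crossingness immediate, and degenerate collinearities are handled by the tie-breaking rule in the radial sort rather than by perturbation. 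The missing idea in your proposal is precisely this: anchor the separating lines at the line points themselves, so that each excursion lives inside its own convex cell.
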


\begin{figure}[t]
\centering\includegraphics[scale=0.35]{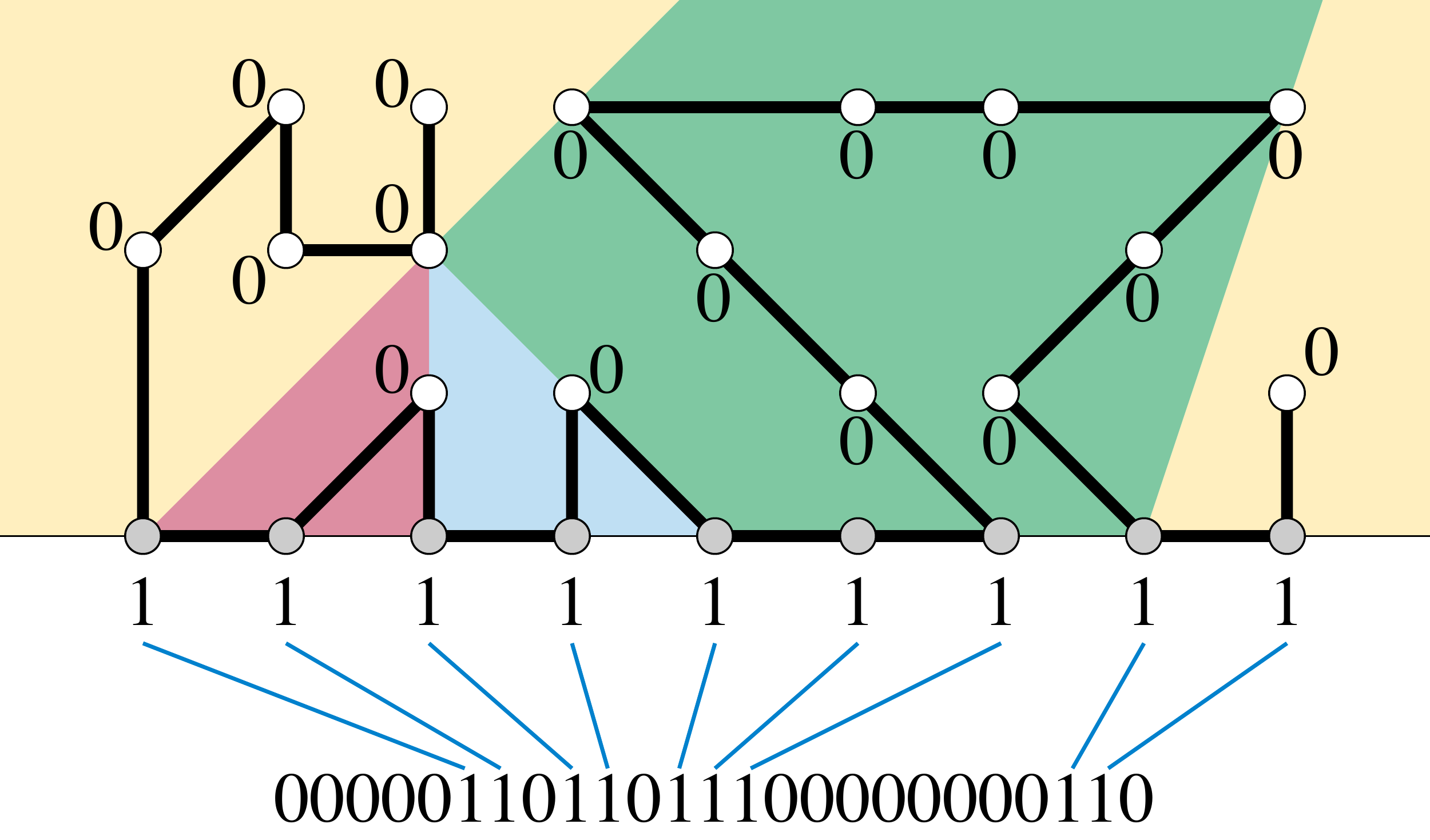}
\caption{Partition of the halfspace above $L$ into convex subsets, a non-crossing Hamiltonian path respecting the partition, and its signature, a $010$-avoiding binary sequence.}
\label{fig:signature}
\end{figure}

\begin{proof}
We may assume without loss of generality that $\ell\ge 2$, for otherwise the lemma states only that there exists a single non-crossing Hamiltonian path, known to be true.
For convenience consider an orientation of the plane in which $L$ is horizontal and $S$ lies in the closed halfspace above $L$.
We will prove the lemma by constructing many non-crossing Hamiltonian paths in which the points of $L\cap S$ appear in left-to-right order, and no point in $L$ has two neighbors in $S\setminus L$. For any such path, define its \emph{signature} to be the binary sequence with a $1$-bit for points in $L$ and a $0$-bit for points in $S\setminus L$ (\cref{fig:signature}). It has length $n$, with $\ell$ $1$-bits, and does not have any three consecutive bits in the pattern $010$. The number of $010$-avoiding sequences with $n$ bits and $\ell$ $1$-bits is~\cite{A180562}
\[
\sum_{j=0}^{n-\ell} (-1)^j \binom{n-\ell-1}{j}\binom{|n-2j|}{\ell-j}\ge \binom{n-\lceil \ell/2\rceil}{\lfloor \ell/2\rfloor},
\]
where for the simpler formula on the right hand side of the inequality the $1$-bits are grouped into $\lfloor \ell/2\rfloor$ pairs (with one group of three if $\ell$ is odd) and we count only the sequences in which each pair or triple appears consecutively. (By the assumption that $\ell\ge 2$, at least one such grouping is possible.) As we argue in the remainder of the proof, every $010$-avoiding sequence is the signature of at least one non-crossing Hamiltonian path, so this lower bound on the number of $010$-avoiding sequences also provides a lower bound on the number of non-crossing Hamiltonian paths.

For a given $010$-avoiding sequence $\sigma$, let $n_i$ denote the length of the $i$th non-empty block of consecutive $0$-bits in $\sigma$. We will partition the halfplane above $L$ into convex sets $C_i$, each containing $n_i$ points of $S\setminus L$, by 
a greedy process that maintains a convex subset of the halfplane containing the remaining points to be partitioned. Initially the convex subset is the entire halfplane and the remaining points are $S\setminus L$. On the $i$th step (for any $i$ other than the last one), let $p_i$ be the point of $S\cap L$ that corresponds to the $1$-bit of $\sigma$ following the $i$th block of $0$-bits.
Sort the remaining points of $S\setminus L$ radially around $p_i$ (in left to right order with respect to $L$) breaking ties in favor of closer points to $p_i$, and let $q_i$ be the point in position $n_i$ of this sorted order. Draw line $p_iq_i$, separating $C_i$ on its left from a remaining convex subset on its right. Assign the $n_i$ points up to $q_i$ in the radial sorted order to set $C_i$, and leave the remaining points (possibly including farther points on line $p_iq_i$) unassigned. In the final step of this construction, assign all remaining points to the final remaining convex region. For instance, in \cref{fig:signature}, the  leftmost set $C_1$ (yellow) is separated from the rest of the halfplane by line $p_1q_1$. Here $p_1$ is the leftmost point of $L$, and $q_1$ is the fifth point in the radial ordering around $p_1$. Point $q_1$ lies on the boundary of four convex regions but is assigned to the first,~$C_1$. Line $p_1q_1$ also contains another point of $S$, farther from $p_1$, which is assigned to $C_4$ (green).

Once this partition into convex sets has been determined, use \cref{lem:2hull} to find a non-crossing Hamiltonian path within each convex set $C_i$ that starts and ends at its (one or two) points on $L$, and connect these paths in sequence by segments of $L$ to form a non-crossing Hamiltonian path for all of $S$, with the given $010$-avoiding sequence $\sigma$ as its signature.
\end{proof}

\subsection{Putting the bounds together}

Combining the two different lower bounds into a single formula, we have:

\begin{lemma}
\label{lem:path-lower}
Let $S$ be a set of $n$ points with $\offline(S)=k.$ Then
\[ \log\nham(S)=\Omega\left(k\left(\log\frac{n}{k+1}\right)\right).\]
\end{lemma}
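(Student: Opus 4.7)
The plan is to combine Lemmas \ref{lem:count-vv} and \ref{lem:1side} by splitting on the size of $k=\offline(S)$ relative to $n$.

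For the ``large'' case $k+1 > n/4$, we have $\log\bigl(n/(k+1)\bigr)=O(1)$, so the target bound reduces to $\log\nham(S) = \Omega(k)$, which follows immediately from Lemma \ref{lem:count-vv} (with the case $k=0$ being trivial, since the target bound then evaluates to~$0$ while $\nham(S)\ge 1$ always).

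For the ``small'' case $k+1\le n/4$, let $L$ be a line containing $n-k$ points of~$S$, let $S^+$ and $S^-$ denote the points of~$S$ strictly above and strictly below~$L$, and assume without loss of generality that $|S^+|\ge k/2$. The set $T=(L\cap S)\cup S^+$ satisfies the hypothesis of Lemma \ref{lem:1side} with $\ell=n-k$ points on~$L$ and the remaining $|S^+|$ points all on one side, so Lemma \ref{lem:1side} yields
\[
\nham(T)\;\ge\;\binom{|S^+|+\lfloor (n-k)/2\rfloor}{\lfloor (n-k)/2\rfloor}\;\ge\;\binom{n/2}{k/2},
\]
whose logarithm, by Lemma \ref{lem:log-binom}, is $\Theta\bigl(k\log(n/k)\bigr)=\Omega\bigl(k\log(n/(k+1))\bigr)$.

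The remaining step is to promote this bound on $\nham(T)$ to a bound on $\nham(S)$ by constructing an injection $\iota$ from Hamiltonian paths of~$T$ into Hamiltonian paths of~$S$. For each Hamiltonian path $P$ of~$T$ produced by the proof of Lemma \ref{lem:1side} (those whose signatures have the $1$-bits grouped into consecutive pairs or triples), select a canonical adjacent pair $p,q$ of line points in~$P$, for instance the leftmost such pair. Apply Lemma \ref{lem:2hull} to $\{p,q\}\cup S^-$ --- on whose convex hull both $p$ and $q$ lie, since $S^-$ sits strictly below~$L$ --- to obtain a non-crossing Hamiltonian path $Q$ of $\{p,q\}\cup S^-$ from $p$ to~$q$. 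Replacing the edge~$pq$ in~$P$ by the sub-path $Q$ produces $\iota(P)$, which is Hamiltonian for~$S$ and non-crossing: $P$ stays in the closed upper halfplane, while the interior of $Q$ stays strictly below~$L$, so the two sub-paths meet only at $p$ and~$q$. The map $\iota$ is injective because its inverse is explicit --- the unique maximal sub-path of $\iota(P)$ whose interior vertices all lie in $S^-$ is exactly~$Q$, and deleting it while reconnecting its endpoints by a single edge recovers~$P$. Hence $\nham(S)\ge\nham(T)$, finishing the small case.

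The main obstacle lies in the small case, specifically in the geometric bookkeeping for the injection: ensuring that the $S^-$-insertion neither introduces crossings with the rest of~$P$ nor collapses distinct $T$-paths onto a common $S$-path. Everything else is a case split plus binomial arithmetic.
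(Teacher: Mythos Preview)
Your proof is correct and follows essentially the same approach as the paper: split on whether $k$ is a constant fraction of $n$, invoke \cref{lem:count-vv} in the large case, and in the small case apply \cref{lem:1side} to the line together with the larger side and then graft in the smaller side. The only difference is in the grafting step---the paper concatenates a path through $K_2$ at the extreme line point $p$, whereas you splice the $S^-$-path between an adjacent pair of line points---and one cosmetic slip: your injection is defined only on the special \cref{lem:1side} paths, so strictly speaking it yields $\nham(S)\ge\binom{|S^+|+\lfloor(n-k)/2\rfloor}{|S^+|}$ rather than $\nham(S)\ge\nham(T)$, though this is exactly what you need (and in fact, since $\ell>|S^+|+1$ in your regime, a pigeonhole count shows every Hamiltonian path of $T$ contains an adjacent line pair, so the stronger claim is also true).
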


\begin{proof}
For any $k$, the logarithm of the number of non-crossing Hamiltonian paths is $\Omega(k)$ by \cref{lem:count-vv}.
For $k\ge n/3$, $\log(n/k)=O(1)$, so for this range of $k$, this $\Omega(k)$ bound is equivalent to the bound stated in the theorem.

For smaller values of $k$, let $K$ be any set of $k$ points whose removal from $S$ leaves a collinear set of size $\ell=n-k=\Omega(n)$, belonging to a line $L$. Partition $K$ into the two subsets $K_1$ and $K_2$ on the two sides of $L$, with $|K_1|\ge |K_2|$; let $|K_1|=k'\ge k/2$.
Let $p$ be the first point of $S\cap L$, in the sorted sequence of the points along this line, let $\ell'=\ell-1$ be the number of remaining points in $S\cap L$, and let $n'=\ell'+k'$. By \cref{lem:1side}, the number of non-crossing Hamiltonian paths of $S\setminus K_2$ that start at $p$ is at least
\[
\binom{n'-\lceil\ell'/2\rceil}{\lfloor\ell'/2\rfloor}=\binom{\lfloor\ell'/2\rfloor+k'}{k'}.
\]
Each such path can be extended to a non-crossing Hamiltonian path of all of $S$ by concatenating any non-crossing path through $p$ and the points of $K_2$. By the assumption that $k<n/3$, the bottom term of the right binomial coefficient is at most half the top term, allowing us to apply \cref{lem:log-binom}. By this lemma, and the facts that $k'=\Theta(k)$ and $\ell'=\Theta(n)$, the logarithm of this binomial coefficient is $\Omega\bigl(k\log(n/k)\bigr)$ as stated.
\end{proof}

Although the upper bound of \cref{lem:path-upper} and the lower bound of \cref{lem:path-lower}  are not quite the same, we can combine them to achieve a constant factor approximation to the logarithm of the number of non-crossing paths, or Hamiltonian paths.

\begin{theorem}
\label{thm:path-equivalence}
For a given point set $S$,
\[ \npath(S)=\bigl(|S|\cdot\nham(S)\bigr)^{O(1)}.\]
\end{theorem}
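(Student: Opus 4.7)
The plan is to combine the two asymptotic bounds already established in this section. Lemma \ref{lem:path-upper} provides the upper bound
\[\log\npath(S)=O\left(\log n + k\log\frac{n}{k+1}\right),\]
and Lemma \ref{lem:path-lower} provides the matching lower bound
\[\log\nham(S)=\Omega\left(k\log\frac{n}{k+1}\right),\]
where $n=|S|$ and $k=\offline(S)$. So I would first show that the $k\log\bigl(n/(k+1)\bigr)$ term in the upper bound is $O\bigl(\log\nham(S)\bigr)$, using the lower bound directly, and then absorb the leftover $\log n$ term into the $|S|$ factor on the right-hand side of the theorem.

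Combining these gives $\log\npath(S)=O\bigl(\log n+\log\nham(S)\bigr)=O\bigl(\log(n\cdot\nham(S))\bigr)$, and exponentiating yields the desired polynomial relation $\npath(S)=\bigl(|S|\cdot\nham(S)\bigr)^{O(1)}$. The step requires no new geometric ideas; it is essentially bookkeeping after the real work done in the previous lemmas.

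The only genuine obstacle is handling the degenerate case $k=0$, where $S$ is collinear and the lower bound of Lemma \ref{lem:path-lower} is vacuous. Here I would observe directly that a collinear $S$ has a unique non-crossing Hamiltonian path (so $\nham(S)\ge 1$), while every non-crossing path is just a subinterval of the sorted sequence of points, giving $\npath(S)=O(n^2)$. Thus the bound $\npath(S)\le (|S|\cdot\nham(S))^{O(1)}$ holds trivially when $k=0$, and for $k\ge 1$ we have $\log(n/(k+1))=\Theta(\log(n/k))$, so the upper and lower bound match up to constants and the combination above goes through cleanly.
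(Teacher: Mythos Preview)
Your proposal is correct and follows essentially the same route as the paper: take logarithms, invoke \cref{lem:path-upper} for the upper bound on $\log\npath(S)$ and \cref{lem:path-lower} for the lower bound on $\log\nham(S)$, observe that the two formulas differ only by an additive $\log|S|$ term, and conclude. Your explicit treatment of the collinear case $k=0$ is a harmless bit of extra care; the paper absorbs it implicitly into the $\log|S|$ discrepancy between the two bounds.
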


\begin{proof}
Taking logs of both sides, it is equivalent to write that
\[\log\npath(S)=O(\log |S|+\log\nham(S)).\]
This follows immediately from \cref{lem:path-lower} and \cref{lem:path-upper}, according to which $\log\nham(S)$ is lower-bounded and $\log\npath(S)$ upper-bounded (respectively) to within constant factors by formulas that differ from each other only in an additive $\log|S|$ term.
\end{proof}

\section{Counting cycles}
\label{sec:counting-cycles}

For counting both surrounding cycles and polygonalizations of general-position point sets, in place of $\offline(S)$ (which the general-position assumption makes trivial) we use the following parameter:

\begin{definition}
Let $\inhull(S)$ denote the number of points of $S$ that are interior to the convex hull of $S$.
\end{definition}

For counting cycles and polygonalizations of point sets that are not assumed to be in general position, we will use a combined analysis in terms of both $\offline$ and $\inhull$.

\iffull
\subsection{Upper bound}
\else
\subsection{Omitted lemmas}
\fi

Our bounds for surrounding cycles and polygonalizations follow similar arguments to our bounds for paths.
\iffull
\cref{lem:path-upper}. A key observation (true for surrounding cycles but not paths) is that the vertices of the convex hull that do not connect to interior points behave predictably:
\else
We defer many details and all proofs to the full version of this paper because of space limitations.
\fi

\iffull
\begin{observation}
In a surrounding cycle for a point set $S$, every vertex of the convex hull of $S$ is a vertex of the cycle, and every edge of the cycle that has two convex hull vertices as its endpoints must be a convex hull edge.
\end{observation}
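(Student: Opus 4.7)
The plan is to handle the two claims separately, both relying on the extremality of convex hull vertices of $S$.

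For the first claim, I would take an arbitrary convex hull vertex $v$ of $S$ and rule out the two alternatives. Since $P$ surrounds $S$, $v$ lies in the closed region bounded by $P$. If $v$ were in the relative interior of some edge of $P$, it would be a strict convex combination of the two endpoints of that edge (both in $S$), contradicting extremality. If $v$ were in the open interior of $P$, a small disk about $v$ would lie inside $P$, which is contained in $\operatorname{conv}(S)$ (since $P$'s boundary, and hence its interior, lies in the convex hull of its own vertices), contradicting that $v$ is on the boundary of $\operatorname{conv}(S)$.

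For the second claim, suppose toward contradiction that $pq$ is an edge of $P$ with $p,q$ convex hull vertices but $pq$ not a convex hull edge, and let $\ell$ be the line through $p$ and $q$. The central step is to establish that $\operatorname{conv}(S) \cap \ell = [p,q]$: extremality provides a supporting line at each of $p,q$ that meets the hull only at that vertex, confining $\operatorname{conv}(S)$ to a halfplane whose intersection with $\ell$ is a ray ending at that vertex, so portions of $\ell$ beyond $p$ or $q$ miss the hull entirely. Let $P'$ denote the simple arc from $p$ to $q$ obtained by deleting the open edge $pq$ from $\partial P$; since its edges join points of $S$, $P' \subseteq \operatorname{conv}(S)$, hence $P' \cap \ell \subseteq [p,q]$. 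Simplicity of $P$ then rules out $P'$ meeting the open segment $(p,q)$, since any such meeting is either a self-intersection of $\partial P$ or forces $P$ to use an interior point of $(p,q)$ as a vertex, which again makes $P$ non-simple. Thus $P' \setminus \{p,q\}$ is a connected subset of the plane disjoint from $\ell$, and so lies in a single open halfplane bounded by $\ell$. But because $pq$ is not a convex hull edge, each of the two arcs of the convex hull boundary from $p$ to $q$ contains at least one further convex hull vertex, supplying such vertices strictly on each side of $\ell$; by the first claim these are vertices of $P$ and hence lie on $P' \setminus \{p,q\}$, contradicting its confinement to one halfplane.

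The main subtlety to track is the possibility of additional collinear points of $S$ strictly between $p$ and $q$: these cannot be convex hull vertices, since they are not extremal, and cannot be vertices of $P$ without violating simplicity at the edge $pq$, so they do not interfere with any step above.
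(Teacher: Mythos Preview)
Your proof is correct and follows essentially the same approach as the paper's: for the first claim both arguments use extremality to rule out a hull vertex lying in an edge interior or in the polygon interior, and for the second claim both argue that the rest of the cycle is confined to one side of the line through $p$ and $q$, contradicting the presence of hull vertices on both sides. Your version is simply more explicit, in particular in establishing $\operatorname{conv}(S)\cap\ell=[p,q]$, which is the precise content behind the paper's informal statement that the segment ``blocks all visibilities from one side to the other.''
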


\begin{proof}
The convex hull vertices cannot be interior to the hull or interior to an edge, by definition, so being a vertex of the cycle is the only way they can be surrounded. Any other segment between hull vertices would block all visibilities from one side of the segment to the other. A non-crossing cycle using such a segment could only have vertices on one side of the segment, and would be unable to have among its vertices the vertices of the convex hull of $S$ that lie on the other side.
\end{proof}
\fi

\begin{lemma}
\label{lem:cycle-upper}
Let $S$ be a set of $n$ points, with $\inhull(S)=h$. Then
\[ \log\nsurround(S)=O\left(h\left(\log\frac{n}{h}+1\right)\right).\]
\end{lemma}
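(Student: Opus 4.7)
The plan is to adapt the encoding argument of \cref{lem:path-upper}, letting the convex hull of $S$ play the role of the collinear line $L$ and the interior points play the role of the off-line set $K$. The key structural fact is the observation just stated: every surrounding cycle uses all hull vertices, appearing in their cyclic hull order, and every edge of the cycle joining two hull vertices is itself a hull edge. So each cycle is determined by how its interior-point vertices interact with the hull boundary.

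Given a surrounding cycle $P$, I define $Q$ to be the set of hull vertices adjacent in $P$ to at least one interior point. Each interior vertex of $P$ accounts for at most two vertices of $Q$, so $|Q|\le 2h$. Specifying $Q$ as a subset of the $n-h$ hull vertices costs $\log\binom{n-h}{|Q|}$ bits, which is $O(h\log(n/h)+h)$ by \cref{lem:log-binom}---falling back on the trivial bound $2^{n-h}$ in the regime where $h$ is a constant fraction of $n$, where the bound to be proved is itself $\Omega(n)$.

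The second piece of information is the subgraph $D$ of $P$ induced on the vertex set $Q\cup I$, carried together with its inherited straight-line embedding. Every interior vertex of $P$ must have both of its cycle-neighbors in $Q\cup I$ (any hull-vertex neighbor of an interior point belongs to $Q$ by definition), so $D$ is a planar straight-line graph of maximum degree two on at most $3h$ points. By the standard singly-exponential upper bound on the number of planar straight-line graphs on a finite point set~\cite{ShaShe-CPC-13}, this contributes only $2^{O(h)}$ additional possibilities.

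Finally, I would verify that the pair $(Q,D)$ determines $P$ uniquely: every hull vertex appears in $P$ by the observation, the interior vertices of $P$ are exactly the non-hull vertices of $D$, the ``detour'' edges of $P$ are the edges of $D$, and the remaining edges of $P$ are the hull edges between consecutive (in hull order) hull vertices that are not bridged by a detour. Multiplying the two counts and taking logarithms then yields the claimed bound $O(h(\log(n/h)+1))$. The main obstacle I anticipate is the reconstruction step: I will need to confirm that the straight-line geometry of $D$ alone determines, for each detour component of $D$, which of the two hull arcs between its endpoints is replaced, so that no additional encoding is required to disambiguate this choice.
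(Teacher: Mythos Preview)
Your approach is essentially the same as the paper's: encode the set $Q$ of hull vertices having an interior cycle-neighbor, then encode the induced subgraph $C[I\cup Q]$ as one of $2^{O(h)}$ planar straight-line graphs on $O(h)$ points, and reconstruct. The one difference is that the paper inserts an additional $O(h)$-bit piece of information---for each point of $Q$, whether it has a hull $C$-neighbor and in which direction along the hull---precisely to sidestep the reconstruction obstacle you flag at the end. So the ``main obstacle'' you anticipate is real in the sense that it needs an argument, and the paper buys its way out with $O(h)$ extra bits rather than arguing geometry.

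That said, your proposed geometric reconstruction does go through without the extra bits, once you reframe it as recovering the edge set rather than ``which arc each detour replaces.'' Any hull vertex $u\in H\setminus Q$ has both $C$-neighbors on the hull (by definition of $Q$) and hence, by the observation, both hull edges at $u$ lie in $C$. Thus every hull edge with at least one endpoint in $H\setminus Q$ is forced into $C$; hull edges with both endpoints in $Q$ are in $C$ iff they appear in $D$; and every remaining edge of $C$ has an interior endpoint and hence both endpoints in $Q\cup I$, so it already appears in $D$. This determines the full edge set of $C$ from $(Q,D)$ alone, resolving your anticipated ambiguity.
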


\iffull
\begin{proof}
As in \cref{lem:path-upper} we describe a method for encoding a surrounding cycle using this many bits of information, so that each cycle is uniquely described by this encoding. Let $H$ be the set of convex hull vertices of $S$, let $I=S\setminus H$ be the set of interior points, and let $C$ be the surrounding cycle that we are trying to describe. To describe $C$, we combine the following pieces of information:
\begin{itemize}
\item The set $Q$ of points of $H$ that belong to $C$, but for which zero or one of their neighbors in the hull belong to $C$. There can be at most $2h$ such points (two neighbors of each point in~$I$). $Q$ can be encoded by specifying its size and the subset of $H$ of that size, out of $\tbinom{n-h}{|Q|}$ possibilities, so by \cref{lem:log-binom} the number of bits needed to specify it is $O\bigl(h+h\log(n/h)\bigr)$. (The case where $|Q|$ is too large for \cref{lem:log-binom} to apply is trivial as this would make the bound we are proving superlinear.)
\item For each point in $Q$, a specification of whether it has a neighbor in $H$, and if so in which direction. This takes $O(h)$ bits of information. (This information is redundant when the point in $Q$ is adjacent in $H$ to a point of $H\setminus Q$, as all such pairs must be neighbors, but is needed when $Q$ contains multiple consecutive points of $H$.)
\item The induced subgraph $C[I\cup Q]$, a cycle or linear forest using only the points in $I\cup Q$. As with any type of planar straight-line graph, the number of linear forests on $O(h)$ points is singly exponential in $h$~\cite{ShaShe-CPC-13}, so $C[I\cup Q]$ can be encoded with $O(h)$ bits of information.
\end{itemize}
Then $C$ may be recovered by combining the induced subgraph $P[I\cup Q]$ with convex hull subsequences starting and ending at points of $Q$ and continuing in the specified direction from each of these points. All pieces of this encoding add up to the stated bound on the number of bits needed to encode the entire path.
\end{proof}
\fi

\begin{corollary}
\label{cor:cycle-upper}
Let $S$ be a set of $n$ points with $\min(\offline(S),\inhull(S))=m$. Then
\[ \log\nsurround(S)=O\left(m\left(\log\frac{n}{m}+1\right)\right).\]
\end{corollary}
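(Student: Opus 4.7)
The plan is to split the corollary into two cases based on which of $\offline(S)$ and $\inhull(S)$ attains the minimum $m$. When $m = \inhull(S) \le \offline(S)$, the stated bound is literally the content of \cref{lem:cycle-upper}, and there is nothing more to prove. The substantive work lies in the other case, $m = \offline(S) = k < \inhull(S)$, where I need to establish a separate $\offline$-based bound $\log \nsurround(S) = O(\log n + k\log(n/(k+1)))$; this rearranges into the form $O(m(\log(n/m)+1))$ claimed by the corollary (with the $\log n$ term absorbed for $k \ge 1$, and the $k=0$ case handled directly because then $S$ is collinear and admits no surrounding polygon).

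For this $\offline$-based bound on $\nsurround(S)$, my plan is to mimic the encoding argument of \cref{lem:path-upper} almost verbatim, with cycles in place of paths. Fix $K \subseteq S$ with $|K| = k$ such that $S \setminus K$ lies on a line $L$, and let $\ell = n - k$. For a surrounding cycle $C$, I would encode it using three pieces of information:
\begin{itemize}
\item The set $Q \subseteq L \cap S$ consisting of those vertices of $C$ on $L$ whose two $C$-neighbors are \emph{not} both on $L$. Each vertex of $Q$ contributes at least one $C$-edge from $L$ into $K$, and since every point of $K$ has only two $C$-neighbors, $|Q| \le 2k$. Specifying $|Q|$ and then a subset of $L \cap S$ of that size costs $O(\log n + k\log(n/k))$ bits by \cref{lem:log-binom}, treating the edge case $k > n/2$ trivially exactly as in \cref{lem:path-upper}.
\item For each $q \in Q$, a direction label recording on which side of $L$ the off-$L$ $C$-neighbor lies and, if $q$ also has an $L$-neighbor, which direction along $L$ that neighbor lies. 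This costs $O(k)$ bits total.
\item The induced subgraph $C[K \cup Q]$, a non-crossing straight-line linear forest on $O(k)$ vertices, encoded in $O(k)$ bits via the singly-exponential count of planar straight-line graphs from~\cite{ShaShe-CPC-13}.
\end{itemize}
Summing gives the required bound on $\log \nsurround(S)$.

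The main obstacle is verifying that $C$ is uniquely reconstructible from the triple (set $Q$, direction labels, induced subgraph). Given this data, the non-$L$ portion of $C$ is exactly $C[K \cup Q]$, and the maximal runs of $C$ along $L$ must be chords connecting pairs of $Q$-vertices whose direction labels match up. The subtlety is that a single such chord might pass through several points of $L \cap S$, and I must argue that none of these intermediate points is itself a vertex of $C$. For this I would invoke the standard convention that a simple polygon has no straight-angle vertex, so any vertex of $C$ on the interior of a straight chord along $L$ would be degenerate; therefore the intermediate points of $L \cap S$ are merely surrounded by $C$ (consistent with $C$ being a surrounding polygon) and the reconstruction is unambiguous. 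With this subtlety resolved, the two cases together prove the corollary.
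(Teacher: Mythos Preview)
Your argument is correct, but the paper takes a much shorter route for the case $m=\offline(S)$: rather than re-deriving an encoding for surrounding cycles, it simply observes that removing one edge from a surrounding cycle yields a non-crossing path, and that distinct cycles yield distinct paths (the cycle is recovered by joining the two path endpoints). Hence $\nsurround(S)\le\npath(S)$, and \cref{lem:path-upper} applies directly with no further work. Your direct encoding reaches the same bound but duplicates the effort already invested in \cref{lem:path-upper}. Interestingly, the subtlety you flag about intermediate collinear points is equally present in the paper's reduction: if surrounding polygons were permitted to carry straight-angle vertices, then the edge-removal map would also fail to be injective (a cycle that lists a flat point $c\in L$ as a vertex and the cycle that omits it would both map to the same path, because the path convention forces $c$ to be a vertex regardless). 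So your explicit appeal to the no-straight-angle convention is not a weakness of your approach relative to the paper's---both rely on it. One small correction: $C[K\cup Q]$ need not be a linear forest, since it is the whole cycle when every vertex of $C$ lies in $K\cup Q$; this does not affect the $O(k)$-bit estimate.
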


\iffull
\begin{proof}
When $m=\inhull(S)$ this is \cref{lem:cycle-upper}. When $m=\offline(S)$ this follows immediately from \cref{lem:path-upper}, as each surrounding cycle can be made into a path by removing an edge, and no two such paths can come from the same cycle.
\end{proof}
\fi

\iffull
\subsection{Visible-vertex paths revisited}

\cref{lem:2hull} proved the existence of a non-crossing Hamiltonian path connecting two vertices of a convex hull, using a construction based on visible-vertex paths. For our lower bound on polygonalizations, we will need many such paths. As a warm-up, we provide a version of the bound that we need that applies to point sets in general position.

\begin{figure}[t]
\centering\includegraphics[scale=0.35]{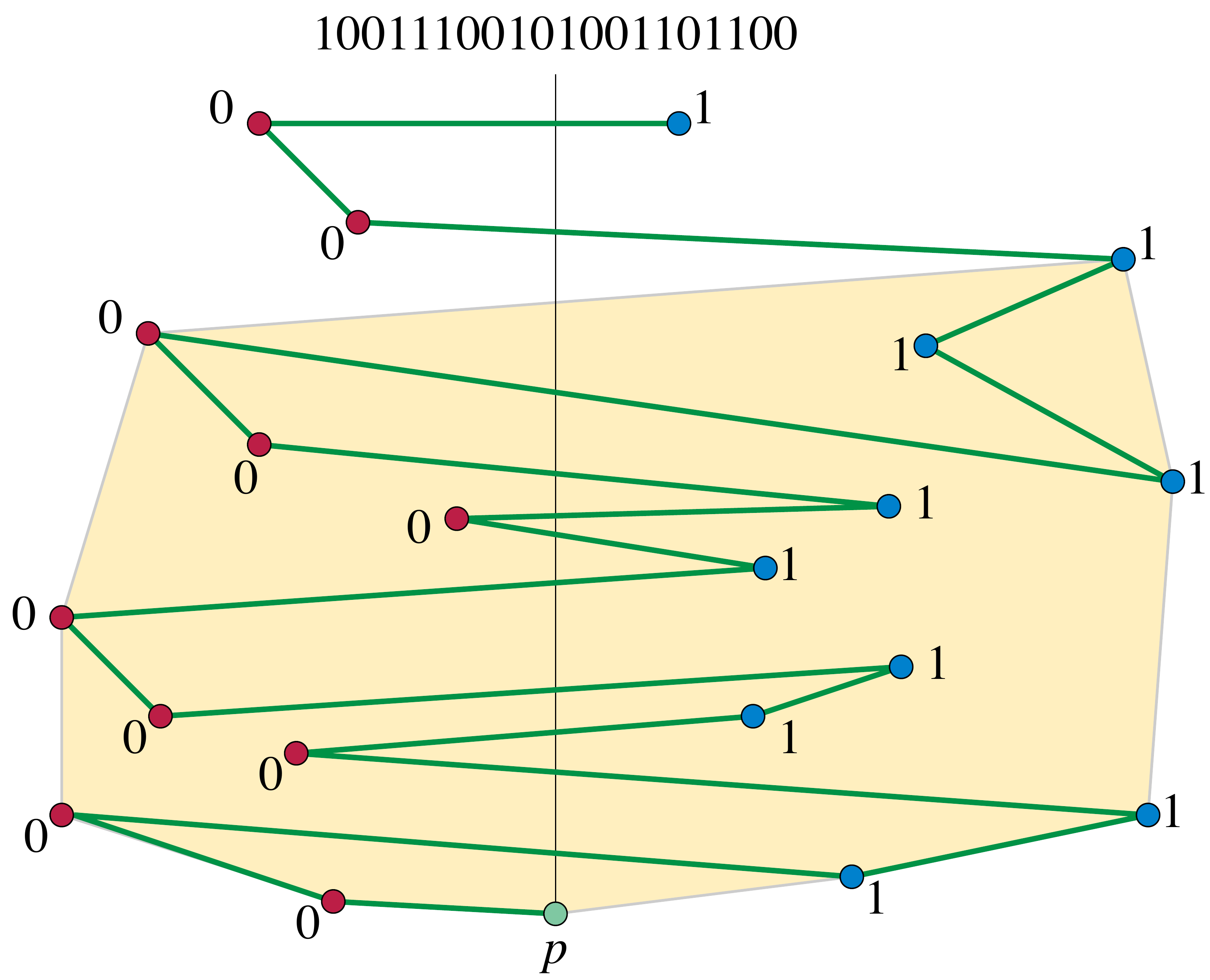}
\caption{Illustration for the proof of \cref{lem:distinct-paths}: The green vertex-visible path is labeled by the sequence $\sigma=10011100101001101100$ for the points shown, split by the vertical line through $p$ (the bottom green point) into $A$ (the red points left of the line) and $B$ (the blue points right of the line). One of the convex hulls of the remaining points partway through the sequence is shown; the path reaches this hull at one endpoint of its unique edge with endpoints in both $A$ and $B$.}
\label{fig:distinct-paths}
\end{figure}

\begin{lemma}
\label{lem:distinct-paths}
Let $S$ be a set of $n$ points in general position, and let $p$ be any vertex of the convex hull of $S$. Then there are at least $\tbinom{n-1}{\lfloor (n-1)/2\rfloor}$ distinct vertex-visible paths that start at other vertices of the convex hull of $S$ and end at $p$.
\end{lemma}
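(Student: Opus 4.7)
For each binary string $\sigma \in \{A, B\}^{n-1}$ with exactly $\lfloor (n-1)/2 \rfloor$ copies of $A$ and $\lceil (n-1)/2 \rceil$ copies of $B$, I will construct a distinct vertex-visible path $P_\sigma$ starting at a convex hull vertex of $S$ and ending at $p$. Since there are $\binom{n-1}{\lfloor(n-1)/2\rfloor}$ such strings, the bound follows.

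Setup: choose a line $\ell$ through $p$ avoiding every other point of $S$ (possible by general position), partitioning $S \setminus \{p\}$ into the two open half-planes $A$ and $B$. By rotating $\ell$ continuously about $p$ the split $(|A|, |B|)$ takes every pair summing to $n-1$, so I can arrange $|A| = \lfloor(n-1)/2\rfloor$ and $|B| = \lceil(n-1)/2\rceil$. Since $p$ is a convex hull vertex of $S$, it is also a convex hull vertex of every subset of $S$ containing it. Thus, whenever the current remaining unvisited set contains points on both sides of $\ell$, its convex hull has a unique \emph{crossing edge} $(a, b)$ with $a \in A$ and $b \in B$, namely the unique hull edge that crosses $\ell$ on the side opposite $p$. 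Given $\sigma$, I build $P_\sigma = (v_0, v_1, \ldots, v_{n-1} = p)$ greedily forward: at step $j$, if the crossing edge $(a_{j+1}, b_{j+1})$ of the remaining hull $H_{j+1}$ exists, set $v_{j+1} = a_{j+1}$ when $\sigma_{j+1} = A$ and $v_{j+1} = b_{j+1}$ otherwise, with the analogous rule using the crossing edge of $\mathrm{Conv}(S)$ to choose $v_0$. Once one side of $\ell$ is fully consumed, the remaining letters of $\sigma$ are necessarily all from the other side, and I continue the visible-vertex greedy algorithm with a fixed tiebreak (e.g.\ smallest-index visible vertex) until $p$ is reached last.

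The main technical claim, to be proved by induction on $j$, is that whenever the crossing edge exists at step $j+1$, both $a_{j+1}$ and $b_{j+1}$ are visible from $v_j$, so the greedy step is legal. The inductive hypothesis is that $v_j$ was itself an endpoint of the crossing edge of the previous hull $H_j$. When $v_j$ is removed from $T_j$ to form $T_{j+1}$, the two hull edges at $v_j$ in $H_j$ are replaced in $H_{j+1}$ by a new convex chain joining $v_j$'s two old hull-neighbors, while the rest of $H_{j+1}$'s boundary coincides with that of $H_j$. The old boundary crossed $\ell$ only at the old crossing edge (which was incident to $v_j$), so the new crossing edge of $H_{j+1}$ must lie along the new chain. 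Because $v_j$ sits on the exterior side of this chain, every vertex on it is visible from $v_j$, and therefore so are $a_{j+1}$ and $b_{j+1}$.

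Injectivity follows from a first-difference argument: if $\sigma \neq \sigma'$, let $j+1$ be the first position of disagreement; by induction the two constructions produce identical prefixes $v_0, \ldots, v_j$, and at step $j+1$ they select distinct crossing-edge endpoints $a_{j+1} \neq b_{j+1}$, so the paths differ. The first difference cannot lie in the post-exhaustion tail, because once one side of $\ell$ has been consumed the remaining letters of any valid $\sigma$ are forced and hence must agree in both strings. The main obstacle in the plan is the inductive visibility claim, which requires some case analysis of how the hull changes when a hull vertex is removed; in particular, one must handle the boundary case in which $v_j$ is the only hull vertex currently on its side of $\ell$, so that one of $v_j$'s old neighbors in $H_j$ is $p$ rather than another point of $A$ or $B$. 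The argument still goes through there because the extreme remaining point on the non-$p$ side must appear as a new hull vertex on the chain, so the new crossing edge again lies on the visible chain.
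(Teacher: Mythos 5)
Your proposal is correct and follows essentially the same route as the paper's proof: the same separating line through $p$, the same encoding by balanced $A$/$B$ strings, the same bridge-edge invariant (which you justify via the convex pocket chain exposed when a hull vertex is deleted, where the paper instead argues from the contiguity of the hull vertices visible from the chosen endpoint), and the same first-difference injectivity argument. The only point to tighten is the tail: an arbitrary tiebreak such as ``smallest-index visible vertex'' could select $p$ prematurely, so the deterministic completion should explicitly prefer vertices other than $p$ whenever possible, exactly as in \cref{lem:2hull}.
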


\begin{proof}
We assume $n\ge 3$ else the lemma is trivial. Draw a line $L$ through $p$ that does not pass through any other point of $S$, splitting $S\setminus\{p\}$ into two subsets $A$ and $B$ in its two half-planes, with $|A|=\lfloor (n-1)/2\rfloor$ and $|B|=\lceil (n-1)/2\rceil$. For each of the possible $\tbinom{n-1}{\lfloor (n-1)/2\rfloor}$ binary sequences $\sigma$ having $|A|$ 0-bits and $|B|$ 1-bits, we will find a vertex-visible path $P_\sigma$ that has a point of $A$ for each 0 in $\sigma$ and a point of $B$ for each 1 in $\sigma$, ending in $p$.

To do so, we will construct this sequence one point at a time. As long as the remaining unchosen points include at least one point of $A$ and at least one point of $B$, the line $L$ will continue to separate points in $A$ from points in $B$. The convex hull of the remaining points must have a ``bridge'' edge $e$ with one point in each set; $e$ is crossed by $L$. Throughout the process we will maintain the following property as an invariant: the sequence constructed so far, both endpoints of this bridge edge $e$ are valid additions to a vertex-visible path (although only one of the two will match sequence $\sigma$). This invariant is  automatically true at the start of the path construction process, when the path constructed so far is empty, because any convex hull vertex is a valid addition to an empty vertex-visible path.

At each subsequent step, let $e$ be the bridge edge. By the invariant, both endpoints of $e$ are valid additions to the path. We choose the endpoint that belongs to the set determined by the corresponding bit of $\sigma$, $A$ for a 0-bit or $B$ for a 1-bit. By the general position assumption, this chosen endpoint can see the other endpoint of $e$, in the other set. If it is not the last point in the set ($A$ or $B$) that contains it, it can also see at least one vertex of the convex hull of the remaining points that lies in its own set. Since the convex hull vertices that it sees must form a contiguous subsequence of the convex hull, it can see both endpoints of the new bridge edge crossed by $L$, separating $A$ from $B$, and the invariant is maintained.

Once all vertices in $A$ or in $B$ have been included in the vertex-visible path, we can complete the path in the same way as \cref{lem:2hull}. Because all paths constructed in this way can be distinguished from each other by the order in which they use vertices from $A$ and from $B$, they are all distinct from each other.
\end{proof}

\cref{fig:distinct-paths} illustrates this construction. The full bound that we need uses similar ideas, but must handle point sets that are not in general position. In particular we must handle the case that, after partially constructing a path as in the proof of \cref{lem:distinct-paths}, we reach a situation in which the edge of the remaining convex hull, crossed by $L$, contains many points, so that its two endpoints are not visible to each other and cannot be made visible by removing only a small number of points. To handle this case, we will apply \cref{lem:1side} (which finds many paths through a point set with many collinear points) to the collinear points on this convex hull edge.

\begin{lemma}
\label{lem:nongen}
Let $S$ be a set of points such that at most $|S|/6$ points lie on any line.
Let $p$ be any vertex of the convex hull of $S$. Then the number of non-crossing Hamiltonian paths through $S$, starting at another convex hull vertex and ending at $p$, is at least singly exponential in $|S|$.
\end{lemma}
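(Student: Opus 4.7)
The plan is to adapt the construction of \cref{lem:distinct-paths} to the non-general-position setting.  Choose a line $L$ through $p$ that misses every other point of $S$, partitioning $S\setminus\{p\}$ into subsets $A$ and $B$ of nearly-equal size on the two sides of $L$.  For each eligible binary signature $\sigma$, run the greedy procedure of \cref{lem:distinct-paths}: at every step, take whichever endpoint of the bridge edge of the convex hull of the remaining points lies on the side of $L$ dictated by the next bit of $\sigma$.  As long as the bridge edge carries only its two endpoints among $S$, this step is valid exactly as in \cref{lem:distinct-paths}.

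The difficulty is that at some step the bridge edge may carry $\ell\ge 3$ collinear points of $S$, so its two endpoints do not see each other and the binary-choice invariant fails.  I would handle this by splitting the construction at the first degenerate bridge edge.  Let $b$ be the number of clean binary steps already completed and let $R\subseteq S\setminus\{p\}$ be the set of remaining unused points, which includes the $\ell$ collinear points on the bridge edge.  Because the bridge edge is a hull edge of $R$, every point of $R$ not on its supporting line lies on a single side of that line, so the hypothesis of \cref{lem:1side} is satisfied on $R$ and that lemma yields at least $\binom{|R|-\lceil\ell/2\rceil}{\lfloor\ell/2\rfloor}$ distinct non-crossing Hamiltonian paths of $R$.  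By tracking through the block-partition construction inside the proof of \cref{lem:1side}, each such path can be attached to the current partial path at the near endpoint of the bridge edge and can be arranged to terminate at a vertex from which the segment to $p$ is unobstructed, yielding a full non-crossing Hamiltonian path of $S$ ending at $p$.

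The resulting count is at least $2^b\cdot\binom{|R|-\lceil\ell/2\rceil}{\lfloor\ell/2\rfloor}$.  The hypothesis that no line of $S$ carries more than $|S|/6$ points enters in two ways: it forces $\ell\le|S|/6$, and it forces $\offline(S\setminus\{p\})\ge 5|S|/6-1$, so any single collinear phase can absorb at most a sixth of the point set.  If $b\ge|S|/12$ the binary factor is already singly exponential in $|S|$; otherwise $b<|S|/12$ gives $|R|-\ell\ge 3|S|/4$, and \cref{lem:log-binom} yields $\log\binom{|R|-\lceil\ell/2\rceil}{\lfloor\ell/2\rfloor}=\Omega\bigl(\ell\log(|R|/\ell)\bigr)$.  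The main obstacle will be the residual sub-case in which both $b$ and $\ell$ are small (for instance $\ell=3$), so that neither factor alone is exponential in $|S|$.  I would resolve that sub-case by iterating: rather than completing the path via \cref{lem:1side}, consume only the $\ell$ collinear points of the current degenerate edge, then restart the greedy binary procedure on the residual set and chain together the binary contributions from each subsequent clean phase.  Since the sum of all absorbed $\ell_i$ plus the accumulated clean count must equal $|S|-1$, and each $\ell_i\le|S|/6$ prevents the collinear phases from dominating prematurely, the total accumulated binary count is forced to be $\Omega(|S|)$, giving the required $2^{\Omega(|S|)}$ paths.
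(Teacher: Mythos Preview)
Your iterative scheme has a genuine counting gap. In the fallback case you propose to ``consume only the $\ell$ collinear points of the current degenerate edge, then restart the greedy binary procedure,'' extracting \emph{no} choices from the degenerate phase and relying on later clean phases to supply the exponent. But the bound $\ell_i\le|S|/6$ constrains each individual degenerate block, not their sum: nothing prevents the process from encountering a degenerate bridge edge at \emph{every} step, with (say) each $\ell_i=3$ and zero clean binary steps in between. Then $\sum_i\ell_i=|S|-1$ while the accumulated binary count is $0$, and your product collapses to a constant. The sentence ``each $\ell_i\le|S|/6$ prevents the collinear phases from dominating prematurely'' is exactly the unjustified step.

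The paper closes this gap by extracting choices \emph{from} the degenerate phase itself, proportional to the number of points it consumes. When $|A\cap e|=x$ is large, the paper does not merely walk through the $x$ collinear points; it carves out a convex region $X$ of size $2x$ containing $A\cap e$ (using the $|S|/6$ hypothesis to ensure $X\subset A$), applies the $010$-avoiding argument of \cref{lem:1side} inside $X$ to get $2^{\Omega(x)}$ distinct sub-paths through $X$, and---crucially---arranges each of them to terminate at a carefully chosen hull vertex $q$ of $X$ from which the entire new bridge edge of the residual set $Y$ is visible, so the invariant is restored and the process continues. Thus every phase, degenerate or not, contributes weight proportional to the points it uses, and the total weight is linear in $|S|$. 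Your proposal also hand-waves the invariant-restoration step (``can be arranged to terminate at a vertex from which the segment to $p$ is unobstructed''); finding such a terminal vertex while preserving the $010$-avoiding count is the substantive geometric content of the paper's third case.
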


\begin{proof}
The process in \cref{lem:distinct-paths} can be reinterpreted as non-deterministically choosing either a point from $A$ or a point from $B$ at each step, until running out of points in one of the two sets and then completing the path deterministically. This produces a binary tree of possible choices, and the the proof of \cref{lem:distinct-paths} can be interpreted as counting the branches of this tree. A simpler analysis, less tight but still exponential, would observe that in this tree, each choice uses up one point of either $A$ or $B$, and the number of points in both steps starts at approximately $n/2$. Therefore, the height (minimum number of steps from the root of the decision tree in any branch until running out of choices) is approximately $|S|/2$, and the number of branches is at least $2^{|S|/2}$. We will generalize this process to allow non-binary choices, using a weighted definition of the height of the decision tree where the weight of each step is the binary logarithm of the number of available choices at that step. If we can show that each step that uses up some number $x$ of points gives us a number of choices that is exponential in $x$, then the weight of that step will be proportional to $x$ and the total height of the decision tree will be again linear in $|S|$, implying that again it has an exponential number of branches.
With these generalities out of the way, it remains to show how to generalize the non-deterministic decision process of \cref{lem:distinct-paths} so that, whenever we use up many points, we select from a number of choices that is exponential in the number of used-up points.

\begin{figure}[t]
\centering\includegraphics[scale=0.35]{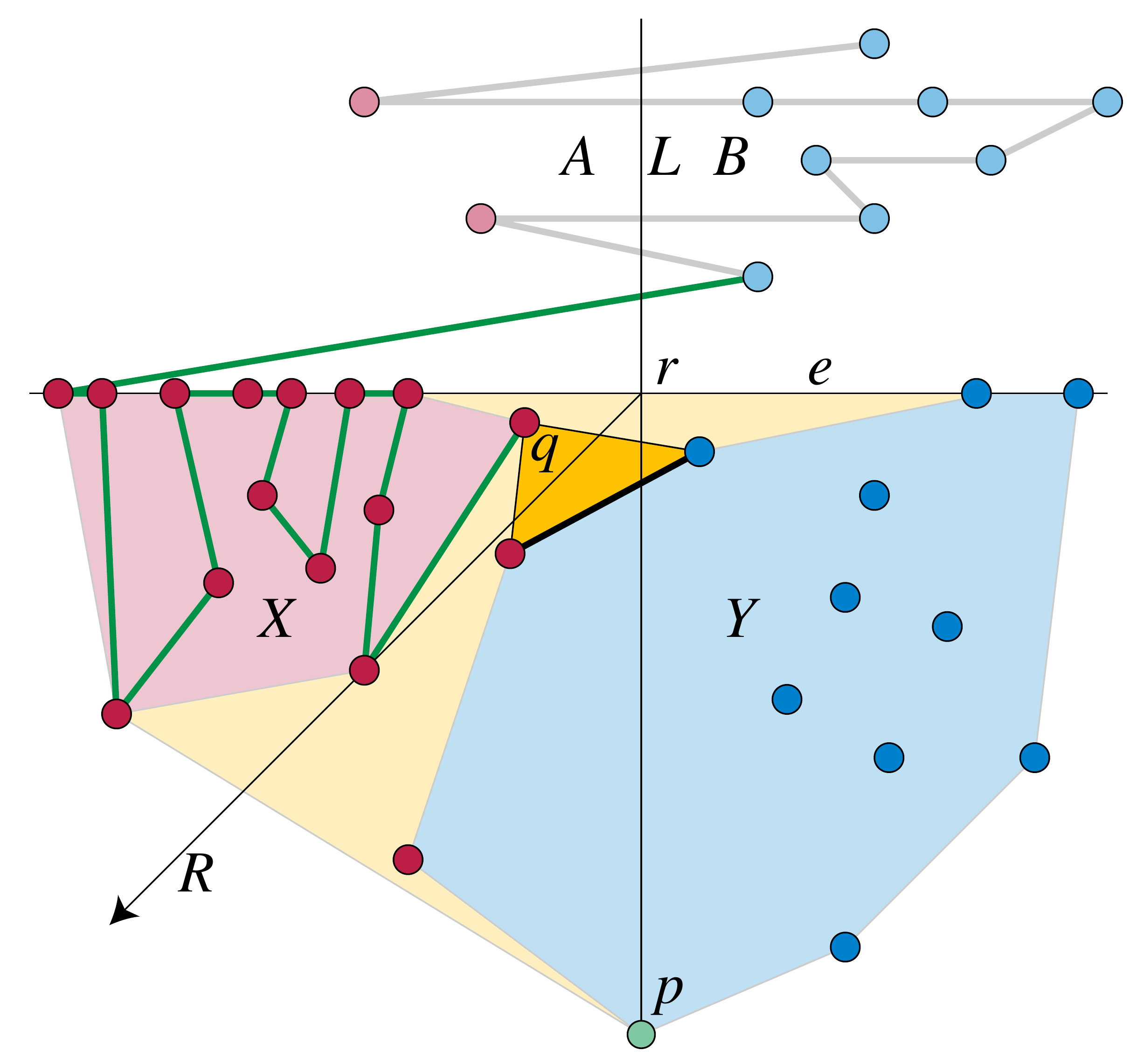}
\caption{The case of \cref{lem:nongen} when there are many points on edge $e$.}
\label{fig:nongen}
\end{figure}

We define the line $L$ and the subsets of points $A$ and $B$, initially of size approximately $|S|/2$, in exactly the same way as in the proof of \cref{lem:distinct-paths}. We will build a non-crossing path to $p$, covering all points in $S$, in a sequence of steps. It will not necessarily be a visible-vertex path, but we will maintain after each step the same invariants as in \cref{lem:distinct-paths}. Namely, the path constructed so far will remain disjoint from the convex hull of the remaining points (so that any non-crossing path within those remaining points will be non-crossing globally), and as long as $A$ and $B$ are both non-empty, the unique $A$--$B$ edge $e$ of the convex hull of the remaining points will be entirely visible from the last point that has already been added to the path, so that any point on $e$ may be added as the next point on the path. We distinguish the following cases:
\begin{itemize}
\item If there are fewer than $|S|/3$ points remaining in $A$ or $B$, stop making choices and complete the rest of the path deterministically.
\item If $A\cap e$ and $B\cap e$ are both small (at most three points), we extend the current path by a segment to one of the two endpoints of $e$, and then (if its visibility along $e$ is blocked by another point in the same set) by more segments to one or both blocking points. This case produces two choices (which endpoint of $e$ to extend to) and uses up at most three points of $A$ or $B$. It maintains the invariant by the same reasoning as in \cref{lem:distinct-paths}.
\item Otherwise, at least one of $A\cap e$ or $B\cap e$ is large (at least four points). We describe the case when $A\cap e$ is large; the case for $B\cap e$ large is symmetric. Let $x=|A\cap e|$, the number of points of $A$ on edge $e$ of the convex hull. Let $r$ be the point where $L$ crosses $e$. Draw a ray $R$ from $r$ that separates the remaining points not yet on the path into two subsets $X$ and $Y$ with $(A\cap e)\subset X$ and $|X|=2x$, breaking ties in such a way that the convex hulls of $X$ and $Y$ stay disjoint. Because $x\le |S|/6$ points lie on a line and at least $|S|/3$ remaining points lie in $A$, $X\subset A$, and ray $R$ lies entirely on $A$'s side of line $L$. In \cref{fig:nongen}, $e$, $r$, and $R$ are labeled; $x=7$, and $X$ is the subset of 14 points whose convex hull is shown as the red shaded area. The blue shaded area is the convex hull of $Y$.

Among the vertices of the convex hull of $X$, at least one (the point in $A\cap e$ closest to $r$) can see a point in $B\cap Y$, the nearest one on edge $e$, along a segment exterior to the hull. Also, at least one of the vertices of the convex hull of $X$ on $R$ can see any remaining points in $A\cap Y$, along a segment exterior to the hull. Therefore, some vertex $q$ of the convex hull of $X$, on or between these two hull vertices, can see all of the unique $A$--$B$ edge of the convex hull of $Y$. For instance, $q$ may be chosen by triangulating the region between the convex hulls of $X$ and $Y$, and choosing the apex of the triangle on this edge. \cref{fig:nongen} depicts a case where $q$ can neither be on $e$ nor $R$.

We will extend the current path, from its current endpoint, to one of the two endpoints of $X\cap e$ (in the figure, the endpoint farthest from $r$), through all points of $X$, ending at $q$. Because $e$ separates $Y$ from the first of these added segments, the remaining added segments lie within the convex hull of $X$, and $R$ separates the convex hulls of $X$ and $Y$, it follows that this extension maintains the invariant that the path so far is disjoint from the convex hull of the remaining point set $Y$. Because $q$ was chosen as a point that could see the $A$--$B$ edge of the convex hull of $Y$, it follows that this extension also maintains the invariant that any point along this edge could be added to the path next.

By \cref{lem:1side}, the number of paths through $X$ that start and end at the two extreme points of $X\cap e$ is exponential in $x$. However, instead we need paths that start at one extreme point of $X\cap e$ and end at $q$. To obtain an exponential bound on the number of paths of this type, we use the same method as in the proof of \cref{lem:1side}: we consider $010$-avoiding sequences of length $2x$, in which each $0$ describes a point on the path that belongs to $X\cap e$ and each $1$ describes a point on the path that belongs to $X\setminus e$. Each such sequence can then be realized as a path by the method used in \cref{lem:1side}, which creates a convex subset of $X\setminus e$ corresponding to each block of consecutive 1-bits of the sequence. In order to obtain a path that starts at an extreme point of $X\cap e$ and that ends at $q$, it is necessary and sufficient that the corresponding $010$-avoiding sequence have a $0$ at one of its ends (representing the starting extreme point) and a large enough consecutive block of 1's at the other end for $q$ to be included in that block. Here ``large enough'' depends on the arrangement of $X$, but each point of $X\setminus e$ contributes to this required block size only when it is more extreme than $q$ in the radial ordering of points around one extreme point of $X\cap e$, and when this happens it cannot also contribute to the required block size for the other extreme point. Therefore, for one of the extreme points of $X\cap e$, at least $x/2$ of the $1$-bits are free to be anywhere in the $010$-avoiding sequence, rather than forced to be in the contiguous block of $1$s at the end of the sequence. This is enough to produce an exponential number of $010$-avoiding sequences and an exponential number of path extensions ending at $q$.
\end{itemize}
Each step produces a number of choices exponential in the number of points that it uses up, until we have used up at least $|S|/6$ points and switch to the deterministic completion. Therefore, the total number of branches of the decision tree, and the total number of paths that it can produce is exponential in $|S|$.
\end{proof}
\fi

\iffull
\subsection{Lower bound for far-from-convex sets}

As we now show, for point sets that have many points interior to their convex hull, and few points collinear, the number of polygonalizations must be large.
\fi

\begin{lemma}
\label{lem:inner-ham}
Let $S$ be a set of $n$ points for which at most $|S|/7$ points lie on any line, and at most $|S|/7$ points lie on the convex hull.
Then the number of polygonalizations of $S$ is at least singly exponential in $S$.
\end{lemma}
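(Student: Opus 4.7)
My plan is to use \cref{lem:nongen} to produce exponentially many non-crossing Hamiltonian paths between two specific convex hull vertices of $S$, then close each such path with the connecting hull edge to form polygonalizations.

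First I would choose any convex hull edge $pq$ of $S$, let $R$ be the set of $S$-points strictly between $p$ and $q$ along segment $pq$, and set $S' = S \setminus R$. By the collinearity hypothesis $|R| \le |S|/7 - 2$, so $|S'| \ge 6|S|/7$, and $pq$ is a clean hull edge of $S'$ (no interior $S'$-points on it). The hypothesis that at most $|S|/7$ points of $S$ lie on any line also gives $|S|/7 \le |S'|/6$, so \cref{lem:nongen} applies to~$S'$. Applying it with endpoint $p$ yields $2^{\Omega(|S|)}$ non-crossing Hamiltonian paths of $S'$ ending at $p$ and starting at hull vertices of $S'$.

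Since the hull of $S'$ has at most $|S|/7$ vertices, pigeonhole gives some starting vertex $q^*$ receiving $2^{\Omega(|S|)}$ of the produced paths. The crucial step is to force $q^* = q$. For this I would revisit the first nondeterministic choice in the construction of \cref{lem:nongen}: the starting vertex of each path is an endpoint of the initial $A$--$B$ bridge edge of the remaining convex hull, where $A$ and $B$ are the two sides of the splitting line $L$ through $p$. By tilting $L$ just off segment $pq$, one can arrange that $q$ lies on this initial bridge, making $q$ one of at most two possible starting vertices, so pigeonhole leaves $2^{\Omega(|S|)}$ paths starting at~$q$. Finally, each such path of $S'$ from $q$ to $p$ extends to a polygonalization of $S$ by appending the sub-path $q, r_1, \ldots, r_{|R|}, p$ that traverses segment $pq$ through the points of $R$ in order; because $pq$ is a hull edge, all remaining $S$-points lie strictly on its interior side, so the appended sub-path does not cross the Hamiltonian part, and the collinear points of $R$ appear only on incident hull sub-segments, yielding $2^{\Omega(|S|)}$ distinct polygonalizations.

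The main obstacle will be controlling the starting vertex produced by \cref{lem:nongen}: forcing the initial bridge edge to have $q$ as an endpoint depends on the geometry near $p$, and may require additional care when the triangle formed by $p$ and its two hull-neighbors contains interior points. In that case I would enlarge $R$ by also removing a bounded number of such near-$p$ interior points, repeat the argument on the smaller set, and reinsert those points via a local construction, absorbing the loss into the constant factor in the exponent.
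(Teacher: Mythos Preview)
Your argument has a genuine gap at the step where you try to force the starting vertex of the paths produced by \cref{lem:nongen} to be~$q$. The construction in \cref{lem:nongen} requires the splitting line $L$ through $p$ to divide $S'\setminus\{p\}$ into two parts $A$ and $B$ of size roughly $|S'|/2$ each; the process stops making nondeterministic choices as soon as either part drops below $|S'|/3$, so an imbalanced split yields only $O(1)$ choices, not exponentially many. But $pq$ is a convex hull edge, so \emph{all} points of $S'$ lie in one closed halfplane bounded by the line through $pq$. Tilting $L$ ``just off segment $pq$'' therefore puts essentially everything on one side and almost nothing on the other, and the exponential lower bound of \cref{lem:nongen} evaporates. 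Your fallback of removing a bounded number of near-$p$ interior points does not address this: the imbalance is global, not local.

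If you abandon the tilting and fall back on pure pigeonhole, you get exponentially many Hamiltonian paths of $S'$ from some hull vertex $q^*$ to $p$, but with no control over $q^*$. When $q^*\ne q$, the point $q$ lies in the interior of the path, so the points of $R\subset pq$ cannot simply be spliced in along a final edge; and closing the path by the chord $q^*p$ can cross the path even when both endpoints are hull vertices (already in convex position: the path $1\,2\,4\,3$ on a square is non-crossing, but the closing chord $1\,3$ crosses the edge $2\,4$).

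The paper sidesteps all of this by applying \cref{lem:nongen} not to $S$ (or $S'$) but only to the set $I$ of interior points, which still has $|I|\ge 6|S|/7$ and at most $|I|/6$ collinear points. Each resulting Hamiltonian path through $I$ lies strictly inside the convex hull of $S$, so there is room to extend its two endpoints to hull vertices of $S$ via edges of a triangulation of the annulus between $\mathrm{hull}(I)$ and $\mathrm{hull}(S)$, then follow hull edges, and finally absorb any skipped hull vertices by the Steinhaus ``visible edge'' argument. Distinctness comes from the ordering of the interior points along the cycle, so no control over the starting vertex is ever needed.
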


\iffull
\begin{proof}
At least $6|S|/7$ points of $S$ are interior to the convex hull, within which the number of collinear points meets the conditions of  \cref{lem:distinct-paths}. By \cref{lem:distinct-paths}, there are exponentially many distinct non-crossing Hamiltonian paths through the subset of interior points, starting and ending at a vertex of the convex hull of the interior points. It remains to argue that each such path $P$ has a polygonalization $C_P$ that uses the points of $P$ in path order. Because each two polygonalizations constructed in this way have a different ordering on the interior points of $S$, they will be distinct polygonalizations.

\begin{figure}[t]
\centering\includegraphics[scale=0.4]{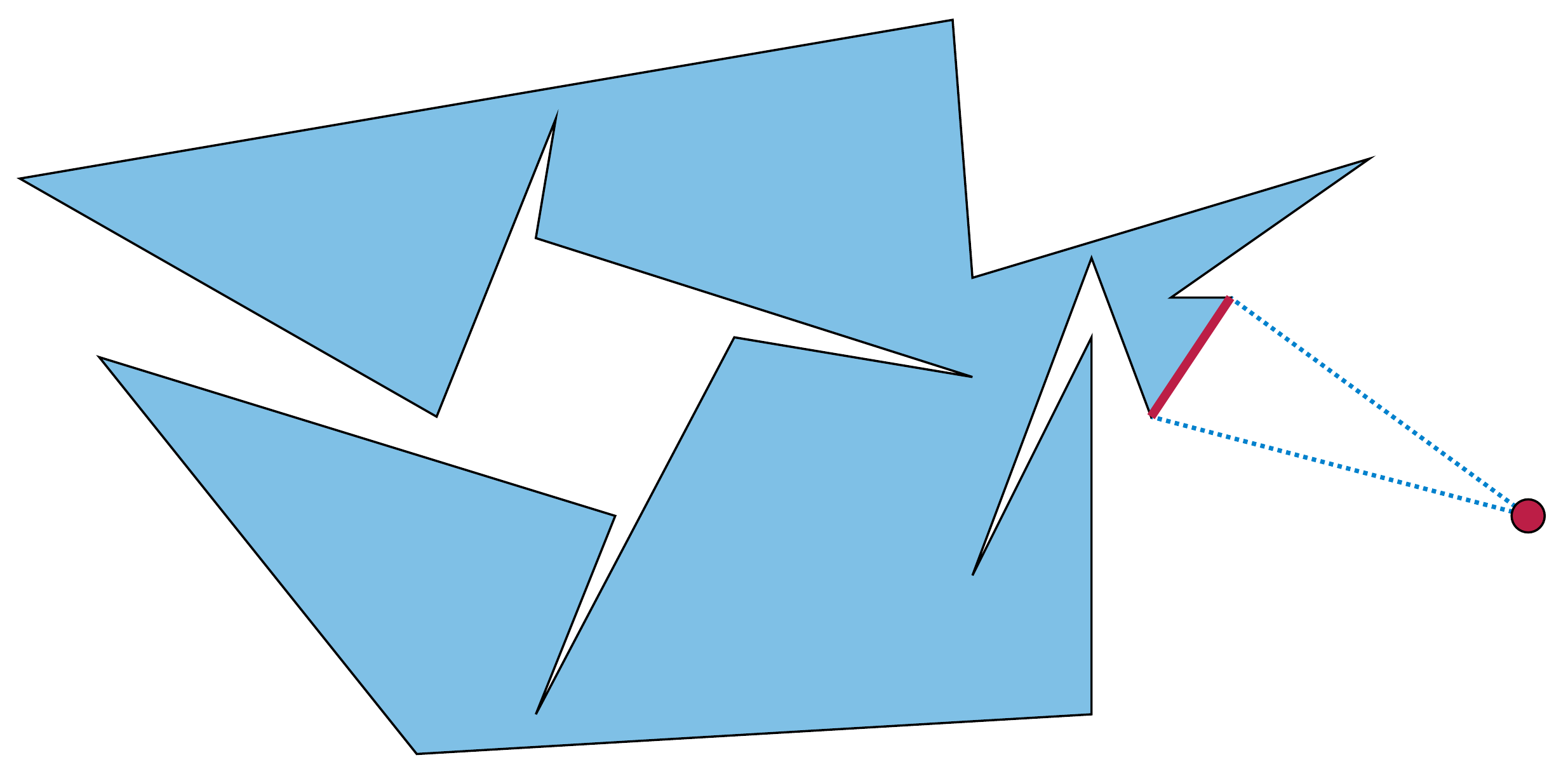}
\caption{Steinhaus completion of a polygonalization: For every polygon (blue) that does not use a given convex hull vertex $q$ (red), some polygon edge will be completely visible to $q$, and the polygon can be modified to include $q$ by replacing that edge by a pair of edges through $q$~\cite{Ste-64}.}
\label{fig:steinhaus}
\end{figure}

Therefore, let $P$ be any non-crossing Hamiltonian path through the subset of interior points, starting and ending at a vertex of the convex hull of the interior points. Triangulate the region between the convex hull of the interior points and the convex hull of $S$, and extend each end of $P$ by an edge from the triangulation; these edges must connect to convex hull vertices of $S$, and cannot cross each other. If they do not meet, complete the extended path to a polygon by following edges around the convex hull of $S$ from one endpoint of the extended path to the other. Let $Q$ be the resulting simple polygon. In most cases, $Q$ will not be a surrounding polygon of $S$, but the only points missing from $Q$ will be some of the convex hull vertices of $S$.

We now apply an argument of Steinhaus~\cite{Ste-64} to show that $Q$ can be completed to a polygonalization $C_P$, preserving the cyclic ordering of the points already in $Q$. Consider each point $q$ of the convex hull of $S$ that is not already in $Q$, in an arbitrary order. Because $q$ is on the convex hull, it cannot be surrounded by a cycle of edges, each partially obscuring the next in the cycle: the visibility ordering of edges of $Q$, as viewed from $q$, is acyclic. Therefore, at least one edge of $Q$ is completely visible to $q$. Replace this edge by a pair of edges through $q$ (gluing a triangle formed by this edge and $q$ onto the polygon; see \cref{fig:steinhaus}). After gluing in all remaining convex hull vertices in this way, the result is a polygonalization of~$S$.
\end{proof}
\fi

\iffull
\subsection{Lower bound for near-convex sets}

For point sets that have few points interior to their convex hull, we have a lower bound on the number of polygonalizations analogous to the bound of \cref{lem:1side} for Hamiltonian paths on near-collinear sets. The main idea of the proof is the same: partition the interior points into a sequence of convex regions, associated with non-adjacent edges of the hull (in place of edges along a line), and show that each such partition can be represented by a polygonalization. However, compared to \cref{lem:1side} some additional care is needed to ensure that the convex region associated with one hull edge does not block visibility to other hull edges and their regions. We do this by splitting the input by a line, with the convex regions on one side of the line and the hull edges they are associated to on the other side. In this way, the visibilities between hull edges and convex regions cannot be blocked. This part of our lower bound does not require a general position assumption.
\fi

\begin{lemma}
\label{lem:poly-few}
Let $S$ be a set of points with $|S|=n$, such that $h$ points of $S$ lie on its convex hull (either as vertices or within its edges). Then
\[
\nham(S)\ge\binom{\lfloor h/4\rfloor + \lceil (n-h)/2\rceil - 1}{\lceil (n-h)/2\rceil}
\]
\end{lemma}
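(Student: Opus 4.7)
My plan parallels the proof of Lemma \ref{lem:1side}: identify a family of combinatorial objects whose count matches the stated binomial, and realize each one as a distinct polygonalization (which, in particular, supplies a corresponding number of non-crossing Hamiltonian paths). The combinatorial family consists of weak compositions of $\lceil (n-h)/2 \rceil$ into $\lfloor h/4 \rfloor$ nonnegative parts; by stars-and-bars, there are exactly $\binom{\lfloor h/4 \rfloor + \lceil (n-h)/2 \rceil - 1}{\lceil (n-h)/2 \rceil}$ of them.

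The geometric setup is as follows. Choose a horizontal line $L$ such that the open half-plane above $L$ contains exactly $\lceil (n-h)/2 \rceil$ interior points; call this set $A$. The line $L$ crosses the convex hull in two arcs, and the lower arc contains at least $\lceil h/2 \rceil$ hull points of $S$, hence nearly that many hull-polygon edges. From these I select $m := \lfloor h/4 \rfloor$ pairwise non-adjacent hull edges $e_1,\ldots,e_m$ in left-to-right order, which is possible since a maximal non-adjacent subset of a path of $k$ edges has size $\lceil k/2\rceil$. Given a composition $(k_1,\ldots,k_m)$, I sort $A$ by $x$-coordinate, partition it into consecutive blocks $A_1,\ldots,A_m$ of sizes $k_1,\ldots,k_m$, and build a polygon by replacing each $e_i = u_iv_i$ with a ``tab'': a non-crossing Hamiltonian path from $u_i$ to $v_i$ on $A_i \cup \{u_i,v_i\}$, which exists by \cref{lem:2hull} because $u_i$ and $v_i$ are hull vertices of this subset (all of $A_i$ lies on one side of line $u_iv_i$, since $e_i$ is a hull edge of $S$). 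The unchosen hull edges on the lower arc and the entire upper arc remain in place, and the $\lfloor (n-h)/2 \rfloor$ interior points strictly below $L$ are absorbed into the polygon by a fixed canonical procedure that modifies only below-$L$ structure and is applied identically across all compositions. Different compositions place the $A$-points into the tabs differently, so the resulting polygons have distinct cyclic vertex orderings and are therefore distinct polygonalizations.

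The main obstacle is showing that the tabs are mutually non-crossing and do not interfere with the fixed below-$L$ construction. The latter is immediate since the tabs lie entirely above $L$ except at their endpoints on the lower arc. For the former, I adapt the greedy convex-partition argument of \cref{lem:1side}: at the $i$th iteration, draw a separating ray rising from an anchor on the lower arc (for example, the right endpoint of $e_i$) across $L$ through an appropriately chosen point, cutting off a convex subregion of the upper half-plane that contains exactly $A_i$ together with $e_i$, while leaving a convex region above for the remaining blocks. Because $L$ separates the lower-arc anchors from the points of $A$, no sightline from an anchor up into the upper half-plane is blocked by any other lower-arc feature; hence the separating rays can always be placed so that consecutive tabs occupy disjoint convex subregions. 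The tab for $e_i$ stays within its assigned subregion because the visible-vertex construction used to prove \cref{lem:2hull} never exits the convex hull of the relevant points. Consequently the tabs are mutually non-crossing, the polygon is simple, and the count of compositions gives the stated lower bound.
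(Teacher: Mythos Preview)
Your plan mirrors the paper's proof---hull edges on one side of a separator, interior points on the other, one polygonalization per composition via a greedy convex partition---but two steps fail as written. First, you place $L$ so that $\lceil(n-h)/2\rceil$ interior points lie above it and then assert that the lower arc carries at least $\lceil h/2\rceil$ hull points; this does not follow, since the position of $L$ is fixed by the interior points and says nothing about where the hull points sit (they could all lie above $L$). The paper instead first chooses a hull diagonal $e$ that bisects the $h$ hull \emph{segments} and only afterward lets $I$ be the interior points on whichever side of $e$ holds the majority. A related slip: sorting $A$ by $x$-coordinate and then separating consecutive blocks by rays from lower-arc anchors need not work, because radial order around an anchor need not agree with $x$-order; in the paper (and in \cref{lem:1side}) the blocks are \emph{defined} by the greedy radial sweep, and only their sizes are prescribed by the composition.

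Second, and more seriously, the ``fixed canonical procedure'' for the interior points below $L$ cannot be fixed across compositions. Each tab has legs running from $u_i,v_i$ (below $L$) up to the first and last points of $A_i$ (above $L$); which points those are depends on the composition, so the below-$L$ geometry that your procedure must thread through changes from one composition to the next. You also have not said what the procedure is, and this is not a detail: inserting interior points into an existing simple polygon without creating crossings is exactly the hard part. The paper sidesteps the issue entirely. Its convex regions $C_i$ partition the \emph{entire} convex hull, so every interior point---on either side of the separating diagonal---lands in some $C_i$ and is absorbed into that region's Hamiltonian path from one endpoint of $d_i$ to the other via \cref{lem:2hull}. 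No separate below-the-line step is needed, and distinctness of the resulting polygonalizations is witnessed purely by the counts $n_i=|I\cap C_i|$.
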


\iffull
\begin{proof}
The points on the convex hull of $S$ partition it into $h$ segments. Consider any diagonal edge $e$ that partitions these segments into two subsequences of equal or nearly equal length; choose arbitrarily one of the two endpoints of $e$ to be first and the other to be second. At least one of the two sides of $e$ will contain at least $\lceil (n-h)/2\rceil$ points that are not on the hull (shown on the right side of $e$ in \cref{fig:hull-slices}); let $I$ denote this set of points. On the other side of $e$, we can find $\lfloor h/4\rfloor$ disjoint segments of the hull; let $D=d_1,d_2,\dots$ denote this sequence of segments, in order from the first endpoint of $e$ to the second endpoint of $e$ (shown in red on the left side of \cref{fig:hull-slices}).

\begin{figure}[t]
\centering\includegraphics[scale=0.3]{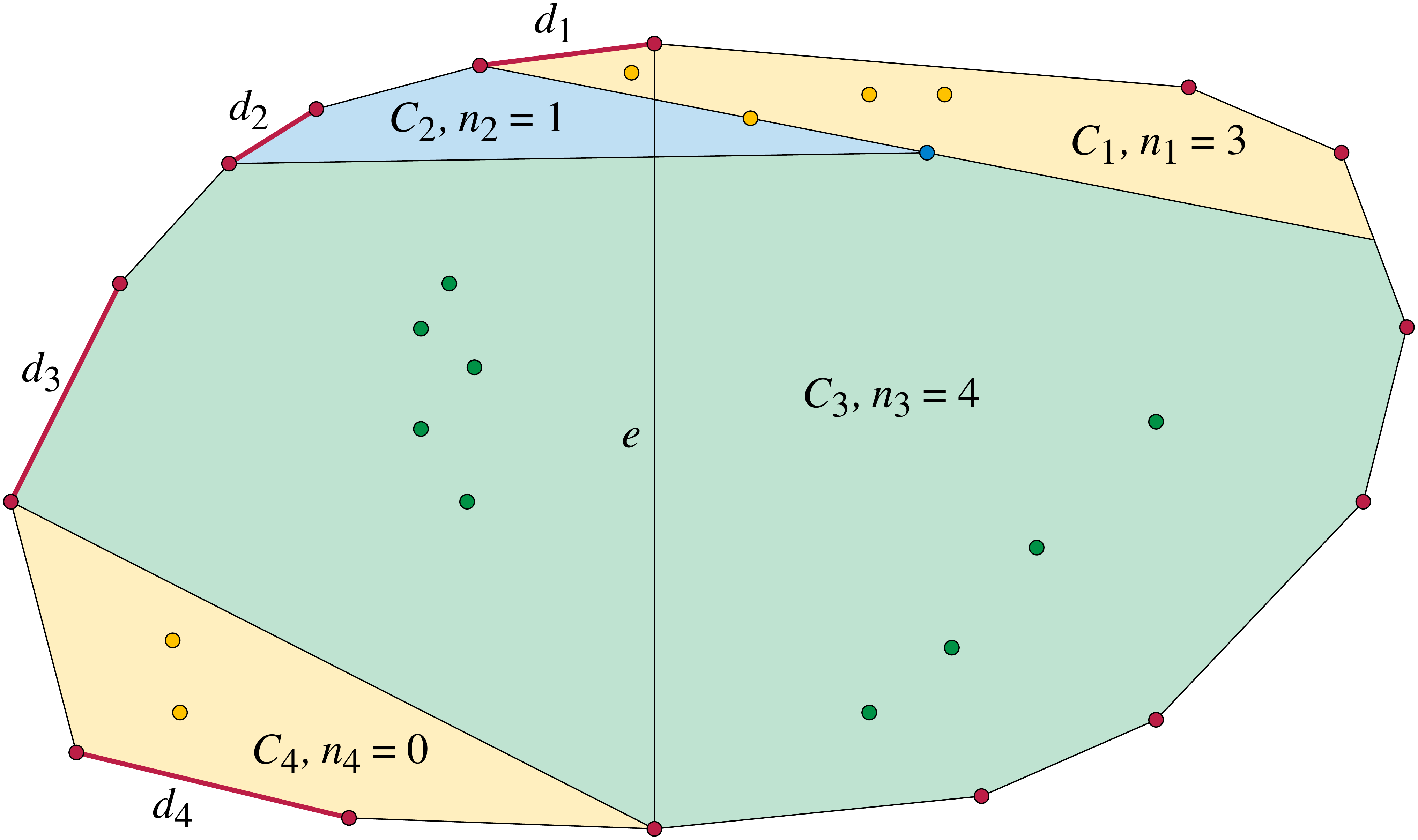}
\caption{Partition of the convex hull of $S$ into convex regions $C_i$ associated with segments $d_i$, for the proof of \cref{lem:poly-few}}
\label{fig:hull-slices}
\end{figure}

We will partition the interior of the convex hull of $S$ into convex regions $C_i$, associated with and including each segment $d_i$ (the four colored regions in \cref{fig:hull-slices}, with interior points colored by which region they belong to). Given such a partition, we can find a polygonalization that modifies the convex hull (a surrounding polygon) by replacing each segment $d_i$ by a Hamiltonian path through the non-hull points in $C_i$, according to \cref{lem:2hull}. Let $n_i=|I\cap C_i|$; then the polygon constructed in this way has $n_i$ points of $I$ between the endpoints of $d_i$, allowing it to be distinguished from any polygon coming from a partition with different values of $n_i$. The numbers $n_i$ are non-negative and sum to $|I|$, and the number of ways of choosing non-negative numbers $n_i$ that sum to $|I|$ is at least equal to the binomial coefficient in the lemma (possibly larger if $D$ or $I$ are larger than their minimum sizes). It remains to prove that each such choice can be represented by a convex partition, and therefore also by a polygonalization.

As in \cref{lem:1side} we construct these convex sets $C_i$ by a greedy process in sequence order. At each step, we maintain the invariant that the remaining points not assigned to a convex set lie within a convex set $K_{i-1}$ formed by the intersection of the convex hull of the input and a half-plane, and that $K_{i-1}$ includes at least one point of $e$. The convex sets $K_i$ are not shown in the figure, but can be reconstructed as the union of the sets $C_j$ for $j>i$. As a base case, let $K_0$ be the convex hull of all of the given points. To construct $C_i$, we consider the following cases.
\begin{itemize}
\item If $d_i$ is the last segment in sequence $D$, let $C_i=K_{i-1}$. In the figure, $C_4$ is constructed by this case.
\item If $d_i$ is not the last segment, but all segments $d_j$ for $j>i$ have $n_j=0$, let $p_i$ be the second endpoint of $d_i$.
Draw line $L_i$ through $p_i$ and the second endpoint of $e$, and cut $K_{i-1}$ along $L_i$ into two convex subsets. Let $C_i$ be the subset containing $d_i$ and let $K_i$ be the other subset. Assign to $C_i$ any points on line $L_i$.
\item If $n_i=0$, draw $L_i$ from the same point $p_i$ to the point on $e\cap K_{i-1}$ closest to the first endpoint of $e$. As in the previous case, and cut  $K_{i-1}$ along $L_i$ into $C_i$ and $K_i$, assigning points on $L_i$ to $C_i$. In the figure, $C_3$ is constructed by this case.
\item In the remaining case, number the points in $I\cap K_{i-1}$ in radial order around $q$, breaking ties in favor of closer points to $q$, and draw line $L_i$ through $q$ and the point in position $n_i$ in this numbered sequence. Cut $C_{i-1}$ along $L_i$ into two convex subsets $C_i$ and $K_i$, where $C_i$ contains $d_i$ and $K_i$. For points that lie on $L_i$, assign the one numbered $n_i$ and any closer points to $C_i$, and assign any farther points to $K_i$. For instance, in the figure, line $L_1$ contains two interior points; the closer yellow point is the one labeled $n_1=3$, and is assigned to $C_1$, while the farther blue point is assigned to $K_1$ and later to $C_2$.
\end{itemize}
In this way, the construction assigns exactly $n_i$ points of $I$ to the convex region $C_i$. These regions partition the convex hull of $S$, in such a way that all interior points of the hull (regardless of whether they belong to $I$) belong to exactly one of these regions. Each assignment can be used to construct a distinct polygonalization, and the number of assignments is at least the value given in the lemma.
\end{proof}
\fi

\subsection{Putting the bounds together}
Combining our lower bounds into a single formula, we have:

\begin{lemma}
\label{lem:cycle-lower}
Let $S$ be a set of $n$ points with $\min(\offline(S),\inhull(S))=m$. Then
\[ \log\npoly(S)=\Omega\left(m\left(\log\frac{n}{m}+1\right)\right).\]
\end{lemma}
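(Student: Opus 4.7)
The plan is to split into three cases based on $m$ and on which of $\offline(S)$ or $\inhull(S)$ realizes the minimum, applying \cref{lem:inner-ham}, \cref{lem:poly-few}, or a direct construction in each, and combining the resulting bounds with \cref{lem:log-binom}.

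\textbf{Case 1} ($m \ge 6n/7$): Both $\offline(S)$ and $\inhull(S)$ are at least $6n/7$, so \cref{lem:inner-ham} applies and yields $\log\npoly(S) = \Omega(n)$, which matches the target $\Omega(m(\log(n/m)+1))$ since $\log(n/m) = O(1)$ in this range.

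\textbf{Case 2} ($\inhull(S) < 6n/7$): I apply \cref{lem:poly-few} with $h = n - \inhull(S) > n/7$. Estimating the resulting binomial by \cref{lem:log-binom} gives $\log\npoly(S) = \Omega(\inhull(S)\log(n/\inhull(S)))$ when $\inhull(S) \le n/3$ and $\Omega(n)$ when $n/3 < \inhull(S) < 6n/7$. Because $m \le \inhull(S)$ and the function $x\log(n/x)$ is increasing on $[0, n/e]$, in both subcases the bound dominates $\Omega(m(\log(n/m)+1))$.

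\textbf{Case 3} ($\inhull(S) \ge 6n/7$ and $\offline(S) < 6n/7$, so $m = \offline(S)$): Let $L$ be a line containing $\ell = n - m$ points $p_1,\dots,p_\ell$ of $S$. I first argue that off-line points lie on both sides of $L$: otherwise $L$ would be a convex hull edge, the on-line points strictly between $p_1$ and $p_\ell$ would fail to be interior to the hull, and we would have $\inhull(S) \le m < 6n/7$, a contradiction. Let $a \ge m/2$ be the number of off-line points on the more populous side, say above. For each $010$-avoiding binary sequence $\sigma$ of length $\ell + a$ with $\ell$ ones and $a$ zeros that starts and ends with $1$, I realize $\sigma$ as a non-crossing Hamiltonian path of $(S\cap L)\cup S_{\text{above}}$ from $p_1$ to $p_\ell$ via the convex-slot partition of the upper halfplane used in the proof of \cref{lem:1side}, and then close that path into a cycle by appending a non-crossing path through $S_{\text{below}}\cup\{p_1,p_\ell\}$ from $p_\ell$ back to $p_1$, whose existence follows from \cref{lem:2hull}. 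The two halves of the cycle lie strictly on opposite sides of $L$, so the whole cycle is non-crossing, and different signatures produce different polygonalizations. The number of such signatures is at least $\binom{\lfloor a/2\rfloor + \ell - 2}{\lfloor a/2\rfloor}$, obtained by distributing $\lfloor a/2\rfloor$ pairs of zeros among the $\ell - 1$ gaps between consecutive ones (promoting one pair to a triple if $a$ is odd), and applying \cref{lem:log-binom} together with $a \ge m/2$ and $\ell = n - m \ge n/7$ yields the required $\Omega(m(\log(n/m)+1))$.

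The main obstacle is Case~3: neither \cref{lem:poly-few} (which becomes weak when $h = n - \inhull(S)$ is small) nor \cref{lem:inner-ham} (which requires both $\offline$ and $\inhull$ to be simultaneously large) applies, so the construction must be built by hand. The delicate verification is that the signature count has the right asymptotics in both the $a \le 2\ell$ regime (where its logarithm is $\Theta(a\log(\ell/a+1))$) and the $a > 2\ell$ regime (where it is $\Theta(\ell\log(a/\ell+1))$), which together cover the full range of $m$ below $6n/7$.
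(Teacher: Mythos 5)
Your proposal is correct and follows essentially the same route as the paper: the same three-way case split at the $6n/7$ and $n/7$ thresholds, invoking \cref{lem:inner-ham} when both parameters are large, \cref{lem:poly-few} when the hull carries many points, and the \cref{lem:1side} construction closed up via \cref{lem:2hull} when a large collinear subset lies off the hull. Your Case 3 is in fact slightly more careful than the paper's, since you explicitly restrict to signatures beginning and ending with $1$ to force the half-path to run between the two extreme points of $L$, a detail the paper's citation of \cref{lem:1side} glosses over.
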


\begin{proof}
We consider the following cases:
\begin{itemize}
\item If $\inhull(S)\le 6n/7$, the result follows from  \cref{lem:poly-few}. In particular this applies when the largest subset of collinear points in $S$ has size $\ge n/7$ and is part of the convex hull.
\item If the largest subset of collinear points in $S$ has size $\ge n/7$ but is not part of the convex hull, then let $L$ be the line through this subset. Because $L$ does not lie on the convex hull, $S\setminus L$ includes points on both sides of $L$, with at least $m/2$ points in one of these two halfplanes. By \cref{lem:1side} the number of Hamiltonian paths through the points in $L$ and the points in this halfplane, starting and ending at the two extreme points of $L$, meets or exceeds the lower bound in the statement of this lemma. Each of these Hamiltonian paths can be completed to a polygonalization through the points in the other halfplane bounded by $L$, by \cref{lem:2hull}.
\item In the remaining case, the largest subset of collinear points in $S$ has size $< n/7$ and $\inhull(S)\ge 6n/7$. In this case, $m=\Omega(n)$ and the bound of the lemma reduces to $\Omega(n)$. The result follows from  \cref{lem:inner-ham}.
\end{itemize}
Since all cases have at least the number of polygonalizations stated, the bound holds.

For a bound of $\Omega(i)$, apply \cref{lem:inner-ham}, and for $\Omega(i\log n/i)$ when $i\le n/2$, apply \cref{lem:poly-few}, in both cases using \cref{lem:log-binom} to estimate the logarithm of the binomial coefficient.
\end{proof}

From the fact that the upper bound of \cref{cor:cycle-upper} and the lower bound of \cref{lem:cycle-lower}  have exactly the same form, we obtain a constant factor approximation to the logarithm of the number of surrounding cycles and of polygonalizations. For our bound on the complexity of the algorithm for listing polygonalizations we need it in the following form:

\begin{theorem}
\label{thm:cycle-equivalence}
For a given point set $S$,
\[ \nsurround(S)=\npoly(S)^{O(1)}.\]
\end{theorem}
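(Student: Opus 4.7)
The plan is to take logarithms of both sides, which reduces the statement $\nsurround(S)=\npoly(S)^{O(1)}$ to showing that $\log\nsurround(S)=O(\log\npoly(S))$. This will follow directly from the fact that the upper bound on $\log\nsurround(S)$ established by Corollary~\ref{cor:cycle-upper} and the lower bound on $\log\npoly(S)$ established by Lemma~\ref{lem:cycle-lower} have matching asymptotic form, namely $\Theta\!\left(m\bigl(\log(n/m)+1\bigr)\right)$, where $n=|S|$ and $m=\min(\offline(S),\inhull(S))$.

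In detail, I would set $n=|S|$ and $m=\min(\offline(S),\inhull(S))$, then invoke Corollary~\ref{cor:cycle-upper} to obtain a constant $c_1$ such that $\log\nsurround(S)\le c_1\cdot m\bigl(\log(n/m)+1\bigr)$, and invoke Lemma~\ref{lem:cycle-lower} to obtain a constant $c_2$ such that $\log\npoly(S)\ge c_2\cdot m\bigl(\log(n/m)+1\bigr)$. Dividing yields $\log\nsurround(S)\le(c_1/c_2)\log\npoly(S)$, equivalently $\nsurround(S)\le\npoly(S)^{c_1/c_2}$, which is the claim with $O(1)=c_1/c_2$. Note that unlike Theorem~\ref{thm:path-equivalence}, no additive $\log|S|$ term is needed here, because the additive constant inside the parenthesized factor $(\log(n/m)+1)$ is already absorbed on both sides by the common $m$ factor, so the two bounds truly have the same functional form.

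The only subtlety is the boundary case $m=0$. When $\offline(S)=0$, the points of $S$ are collinear and both $\nsurround(S)$ and $\npoly(S)$ equal zero (no simple polygon exists), so the statement holds vacuously. When $\inhull(S)=0$ but the points are not collinear, every vertex of the convex hull must appear in every surrounding polygon by the observation preceding Lemma~\ref{lem:cycle-upper}, which in this case forces every surrounding polygon to be a polygonalization, giving $\nsurround(S)=\npoly(S)$ directly. In every remaining case, $m\ge 1$, so $\log\npoly(S)=\Omega(1)$ by Lemma~\ref{lem:cycle-lower}, and the ratio argument is valid.

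There is no real obstacle here: all of the technical work was done in establishing the matching tight upper and lower bounds in the previous two subsections. The proof is a one-line combination of these bounds, together with a brief check that the $m=0$ edge case does not cause the ratio $\log\nsurround(S)/\log\npoly(S)$ to be ill-defined.
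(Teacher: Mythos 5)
Your proof is correct and is essentially the paper's own argument: the paper gives no separate proof of this theorem, simply observing that the upper bound of \cref{cor:cycle-upper} and the lower bound of \cref{lem:cycle-lower} have exactly the same form $\Theta\bigl(m(\log(n/m)+1)\bigr)$, which is precisely the ratio-of-logarithms argument you spell out. Your explicit treatment of the $m=0$ boundary case is additional care that the paper itself omits.
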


\section{Nonlinearity}

In this section we investigate the exponent of the polynomial bounds on non-crossing paths as a function of non-crossing Hamiltonian paths, and on surrounding cycles as a function of polygonalizations. In both cases we show that the exponent is bounded away from one, for inputs in which the number of non-crossing configurations is exponential. This implies, in particular, that the backtracking algorithms that we investigate for Hamiltonian paths and polygonalizations can in some instances be forced to take an amount of time per output that is exponential in the input size, despite being polynomial in the output size.

The construction for paths and Hamiltonian paths is very simple:

\begin{theorem}
There exist sets $S$ of points for which
\[\npath(S)\ge \nham(S)^{\log_2 3-o(1)}\approx\nham(S)^{1.585}\]
and moreover for which both the number of non-crossing paths and the number of non-crossing Hamiltonian paths is exponential in $|S|$.
\end{theorem}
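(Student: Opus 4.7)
The plan is to take $S$ to be a set of $n$ points in convex position. The formula cited in the paper (OEIS A001792) immediately gives $\nham(S) = n \cdot 2^{n-3}$, which is already exponential in $|S| = n$.

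The key observation for counting all non-crossing paths is that every subset of a point set in convex position is itself in convex position, so the same formula $k \cdot 2^{k-3}$ counts the non-crossing Hamiltonian paths on each $k$-point subset (for $k \ge 2$). Since every non-crossing path is a non-crossing Hamiltonian path on its own vertex set, summing over subsets gives
\[
\npath(S) \;=\; 1 + n + \sum_{k=2}^{n} \binom{n}{k}\, k \cdot 2^{k-3},
\]
where the $1$ and $n$ account for the empty path and single-vertex paths.

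To evaluate the sum I would differentiate the binomial identity $(1+x)^n = \sum_k \binom{n}{k} x^k$, obtaining $\sum_k \binom{n}{k} k x^k = n x (1+x)^{n-1}$, and then specialize to $x = 2$. Dividing by $8$ and subtracting the $k = 0, 1$ contributions yields $\npath(S) = \tfrac{n}{4} \cdot 3^{n-1} + O(n)$, so both $\npath(S)$ and $\nham(S)$ are singly exponential in $|S|$.

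Taking logarithms gives $\log \nham(S) = n \log 2 + O(\log n)$ and $\log \npath(S) = n \log 3 + O(\log n)$, so the ratio of these logarithms equals $\log_2 3 + O\bigl((\log n)/n\bigr) = \log_2 3 - o(1)$. This rearranges to $\npath(S) \ge \nham(S)^{\log_2 3 - o(1)}$, as required. There is no real obstacle here: the construction is deliberately simple, and the only algebraic input is a single binomial identity together with the cited formula for non-crossing Hamiltonian paths in convex position.
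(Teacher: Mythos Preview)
Your proof is correct and uses the same construction as the paper: $n$ points in convex position, with $\nham(S)=n\cdot 2^{n-3}$ taken from the cited reference. The paper reaches the same closed form $\npath(S)=\tfrac{n}{4}(3^{n-1}+3)$ (it does not count the empty path, so your extra ``$+1$'' is a harmless convention mismatch), but it derives this formula by a direct $3$-coloring argument---red/blue for ``step clockwise/counterclockwise'' and yellow for ``skip''---rather than by your decomposition over subsets followed by the derivative-of-binomial identity. The two derivations are really the same count viewed from different angles (choosing the yellow points is exactly choosing the complement of your $k$-subset), so this is a presentational difference rather than a genuinely different route; your subset-sum argument has the small advantage of reusing the $\nham$ formula as a black box, while the paper's coloring argument re-proves both counts uniformly.
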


\begin{proof}
We may take $S$ to be in convex position. If a set $S$ in convex position has $n$ points, it has $n2^{n-3}$ non-crossing Hamiltonian paths~\cite{A001792}, but
\[\frac{n}{4}(3^{n-1}+3)\]
non-crossing paths in total, considering a single vertex to count as a path of length zero.

\begin{figure}[t]
\centering\includegraphics[width=0.45\textwidth]{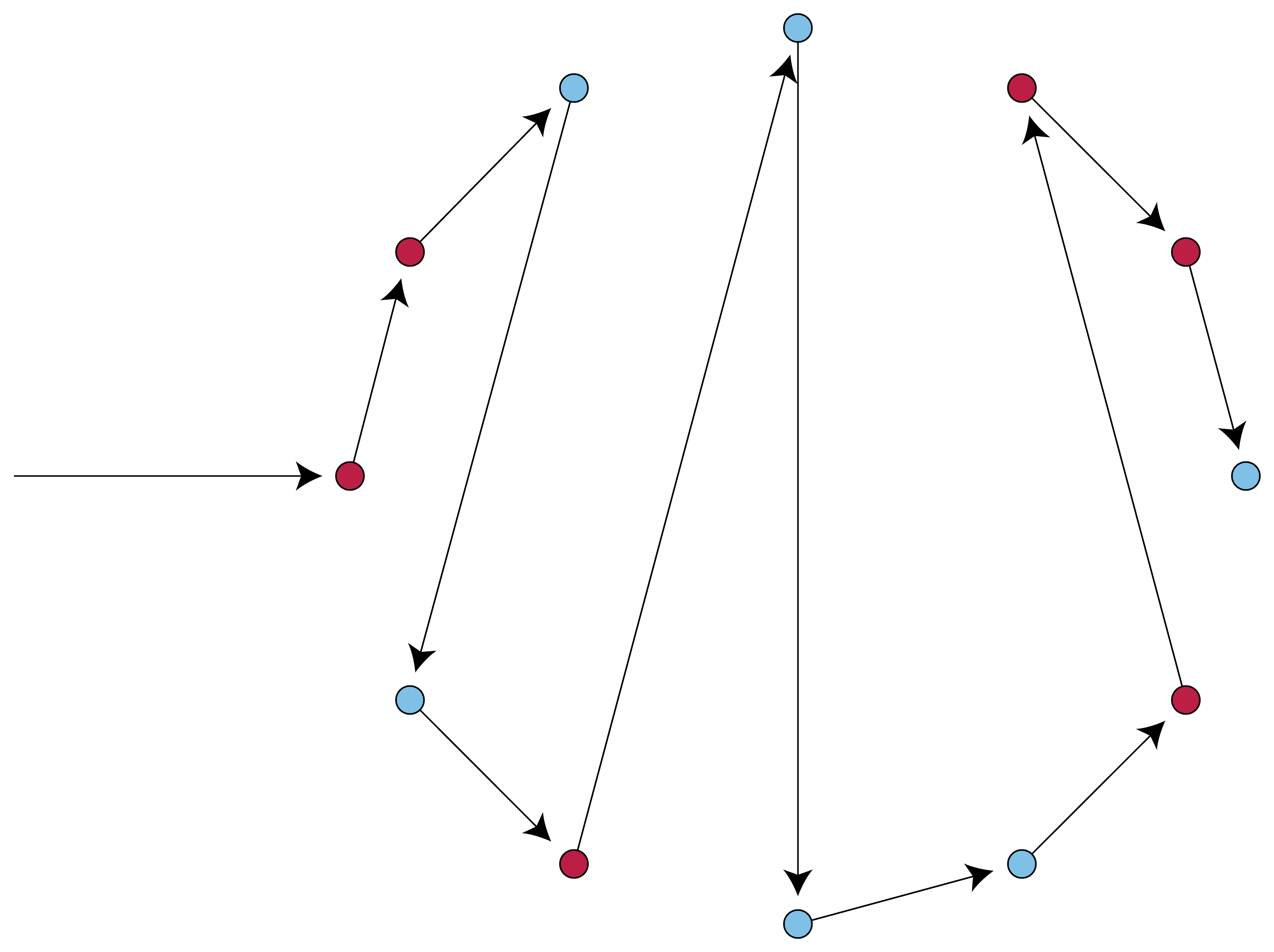}
\qquad\includegraphics[width=0.45\textwidth]{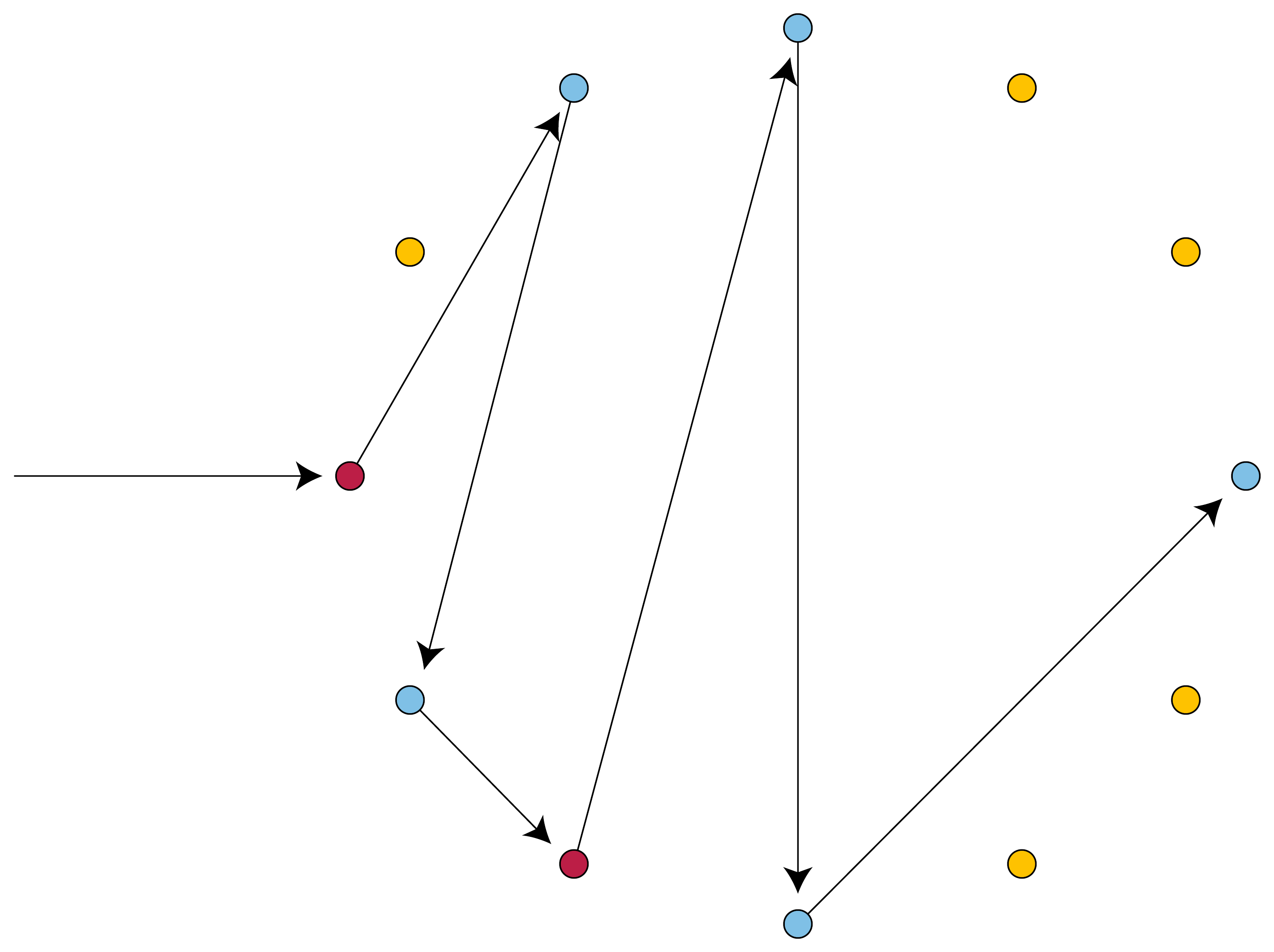}
\caption{Coloring-based arguments for the number of non-crossing paths and Hamiltonian paths of a point set in convex position. Each 2-coloring of the points and choice of starting point determines a non-crossing Hamiltonian path in which the colors determine the direction of the next step; each 3-coloring and choice of starting point determines a path, skipping vertices of one of the colors.}
\label{fig:colored}
\end{figure}

Both bounds can be proven by a simple coloring argument (\cref{fig:colored}). For non-crossing Hamiltonian paths, consider the $2^n$ ways of coloring the points red and blue, and $n$ choices of where to start. For each choice, follow a path that, at each red point,  steps clockwise to the next available vertex, and at each blue point steps counterclockwise. Each non-crossing Hamiltonian path is found for exactly eight choices: the path can start at either end, and the final two vertex colors are irrelevant. Thus, the number of paths is $n2^n/8$.

For the bound on non-crossing paths, consider instead the $3^n$ ways of coloring the points red, blue, and yellow, and skip the yellow points both in choosing a starting point and in considering which vertices are available in subsequent steps of the path. Cyclically permuting the colors in a coloring groups the $3^n$ colorings into $3^{n-1}$ orbits, each of which has a total of $2n$ red or blue starting points, so there are $2n\cdot 3^{n-1}$ choices. Again, each path is found for exactly eight choices, except for the single-vertex paths, which are found for only two choices. The total number of paths is obtained by dividing $2n\cdot 3^{n-1}$ by eight, and adjusting for the number of single-vertex paths.
\end{proof}

For an analogous separation between surrounding polygons and polygonalizations, we replace one edge of a triangle by a convex chain of $n-2$ edges, within the triangle (\cref{fig:pseudotriangle}). The polygonalizations of this point set are obtained from non-crossing Hamiltonian paths through the points of the convex chain, with both ends of the path connected to the one point that does not belong to the convex chain (which we call the \emph{apex}). The edges to the apex cannot cross any other edge, so all Hamiltonian paths lead to polygonalizations in this way. Therefore, by the same formula already used above, the number of polygonalizations is exactly $(n-1)2^{n-4}$.

A surrounding polygon for these points must have the apex as a vertex (because it lies on the convex hull). It
 must also include as vertices all of the points of the convex chain that lie outside the two neighbors of the apex. The points of the convex chain that lie between the two neighbors of the apex may either be vertices of the surrounding polygon, or omitted; if omitted, they will automatically be surrounded. We may parameterize a surrounding polygon by three non-negative numbers: $a$, the number of \emph{outer points} of the convex chain that lie outside the two neighbors of the apex, $b$, the number of \emph{inner points} that lie between these two neighbors and are vertices of the polygon, and $c$, the number of \emph{omitted points} that are not vertices of the surrounding polygon. Necessarily, $a+b+c=n-3$, because the points that are counted by these three numbers include all points except the apex and its two neighbors.
 
\begin{figure}[t]
\centering\includegraphics[width=0.8\textwidth]{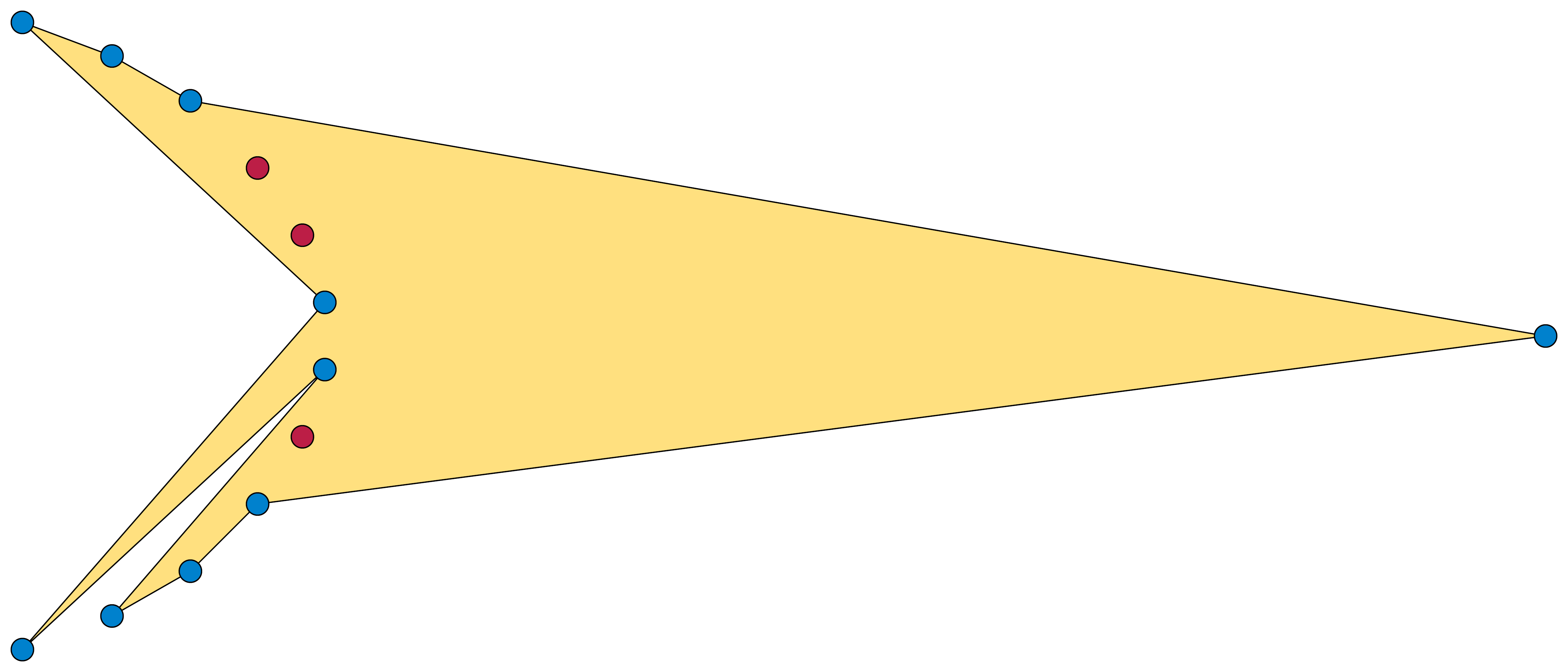}
\caption{Points formed by replacing an edge of a triangle by a convex chain of points within the triangle, and a surrounding polygon (yellow) for these points. Points can be omitted as polygon vertices (red) only when they lie between the two neighbors of the apex (the rightmost point of the figure).}
\label{fig:pseudotriangle}
\end{figure}

Once $a$, $b$, and $c$ have been chosen, the surrounding polygon itself may be determined by three more choices: how to partition the outer $a$ points into left and right subsets, with $a+1$ possibilities, how to alternate between outer and inner points along the polygon, with $\tbinom{a+b}{a}$ possibilities, and how to partition the points between the two neighbors of the apex into inner points and omitted points, with $\tbinom{b+c}{b}$ possibilities. Therefore, the total number of surrounding polygons of this point set is
\[
\sum_{a+b+c=n-3} (a+1)\binom{a+b}{a}\binom{b+c}{b}.
\]
These numbers, for $n=3, 4, 5, \dots$, are
\[
1, 4, 13, 40, 120, 354, 1031, 2972, 8495, 24110, \dots,
\]
a sequence having the generating function $(1-2x)/(1-3x+x^2)^2$ and growing proportionally to $n(\varphi+1)^n$, where $\varphi=(1+\sqrt5)/2$ is the golden ratio~\cite{A238846}. This example proves:

\begin{theorem}
There exist sets $S$ of points for which
\[\nsurround(S)\ge \npoly(S)^{\log_2 (\varphi+1)-o(1)}\approx\npoly(S)^{1.388}\]
and moreover for which both the number of surrounding polygons and the number of polygonalizations is exponential in $|S|$.
\end{theorem}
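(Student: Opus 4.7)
The plan is to convert the two counts that have already been computed in the preceding paragraphs into asymptotic estimates, and then take their ratio. The polygonalization count is given in closed form as $\npoly(S)=(n-1)2^{n-4}$, so
\[
\log\npoly(S) = n\log 2 + O(\log n)
\]
drops out immediately.

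For the surrounding polygons, I would use the rational generating function $(1-2x)/(1-3x+x^2)^2$ stated above. The denominator $1-3x+x^2$ has roots $(3\pm\sqrt 5)/2$, so the dominant singularity of the generating function is a double pole at $x_0=(3-\sqrt 5)/2$. Standard asymptotic analysis of rational generating functions then gives coefficients of order $\Theta(n\cdot x_0^{-n})$, and since $1/x_0=(3+\sqrt 5)/2=\varphi+1$ where $\varphi=(1+\sqrt 5)/2$ is the golden ratio, we obtain
\[
\log\nsurround(S) = n\log(\varphi+1) + O(\log n).
\]

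Combining the two estimates and taking the ratio,
\[
\frac{\log\nsurround(S)}{\log\npoly(S)} = \log_2(\varphi+1) + O\!\left(\frac{\log n}{n}\right),
\]
which rearranges to the stated bound $\nsurround(S)\ge\npoly(S)^{\log_2(\varphi+1)-o(1)}$. The ``exponential in $|S|$'' claim is then immediate because both $\log 2$ and $\log(\varphi+1)$ are positive constants.

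The only delicate step is justifying the generating-function asymptotic — in particular, identifying the location of the dominant singularity and the polynomial-in-$n$ prefactor coming from the double pole. This is a routine partial-fractions or contour-integral computation, and moreover the leading-order exponent $\log_2(\varphi+1)$ depends only on the base of the exponential growth, so even a weaker bound of the form $\nsurround(S)=\Omega\bigl((\varphi+1)^n/\mathrm{poly}(n)\bigr)$ would already suffice. Hence no serious conceptual obstacle remains.
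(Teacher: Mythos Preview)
Your proposal is correct and follows essentially the same approach as the paper: both rely on the explicit construction (triangle with one edge replaced by an interior convex chain), the exact count $\npoly(S)=(n-1)2^{n-4}$, and the generating-function asymptotic $\nsurround(S)=\Theta\bigl(n(\varphi+1)^n\bigr)$, then compare growth rates. The paper simply asserts the growth rate and says ``This example proves'' the theorem, whereas you spell out the dominant-singularity analysis and the ratio of logarithms; this is exactly the computation the paper leaves implicit.
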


\section{Conclusions}

We have developed simple output-sensitive algorithms for listing all non-crossing Hamiltonian paths and all polygonalizations for a point set. However, their dependence on the output size is polynomial, not linear. It would be of interest to find alternative algorithms with a better dependence on the output size, as well as more accurate approximations for the numbers of non-crossing structures.

\bibliographystyle{plainurl}
\bibliography{polygonalization}
\end{document}